\documentclass[runningheads]{llncs}

\usepackage{amsmath}
\usepackage{enumitem}
\usepackage{amssymb}
\usepackage{algorithm}
\usepackage{algorithmic}
\DeclareMathAlphabet\rsfscr{U}{rsfso}{m}{n}

\def \NP   	{\mathbf{NP}}
\def \FNP   	{\mathbf{FNP}}
\def \TFNP   	{\mathbf{TFNP}}

\def \IP        {\mathbf{IP}}
\def \CLS       {\mathbf{CLS}}
\def \PSPACE    {\mathbf{PSPACE}}
\def \PPAD      {\mathbf{PPAD}}
\def \VDF       {\mathbf{VDF}}
\def \TM    	{\mathsf{TM}}
\def \PTM    	{\mathsf{PTM}}
\def \O     	{\mathcal{O}}
\def \Z 	{\mathbb{Z}}

\def \L 	{\mathcal{L}}
\def \S 	{\mathsf{S}}
\def \P     	{\mathsf{P}}
\def \V 	{\mathsf{V}}
\def \M 	{\mathsf{M}}

\def \RO	{\mathsf{H}}
\def \adv       {\mathcal{A}}
\def \adf       {\mathcal{B}}

\def \pp   	{pp}
\def \setup	{\mathsf{Setup}}
\def \eval	{\mathsf{Eval}}
\def \vdf	{\textsc{VDF}}
\def \verify	{\mathsf{Verify}}

\def \ips	{(\prv \leftrightarrow  \vrf)}
\def \vdfs	{\langle \prv \leftrightarrow  \vrf \rangle}
\def \advs	{\langle \adv \leftrightarrow  \vrf \rangle}
\def \adfs	{\langle \adf \leftrightarrow  \vrf \rangle}
\def \negl	{\mathtt{negl}}
\def \poly	{\mathtt{poly}}
\def \st	{\mathtt{state}}
\def \mod       {\; \mathbf{mod} \;}
\def \multgroup#1{(\mathbb{Z}/#1\mathbb{Z})^\times}
\def \vrf       {\mathcal{V}}
\def \prv       {\mathcal{P}}
\def \X         {\mathcal{X}}
\def \Y         {\mathcal{Y}}

\def \EOL	{\textsc{EOL}}
\def \EOML	{\textsc{EOML}}
\def \SVL	{\textsc{SVL}}
\def \rSVL	{\textsc{rSVL}}

\sloppy
\begin{document}
\title{How hard are verifiable delay functions?}
\titlerunning{$\VDF \subsetneq \CLS$}
\author{Souvik Sur
\orcidID{0000-0003-1109-8595}}
%
%
\institute{
\email{souviksur@gmail.com}}
\maketitle              

\begin{abstract}
Verifiable delay functions ($\vdf)$ are functions that take
a specified number of sequential steps to be evaluated but can be verified efficiently. 
In this paper, we introduce a new complexity class that contains all the $\vdf$s. 
We show that this new class $\VDF$ is a subclass of $\CLS$ (continuous local search) and
\textsc{Relaxed-Sink-of-Verifiable-Line} is a complete problem for the class $\VDF$.

\end{abstract}

\keywords{ Verifiable delay functions, Sequentiality, Turing machine, Space-time hierarchy}

\section{Introduction}\label{introduction}
 In 1992, Dwork and Naor introduced the very first notion 
 of $\vdf$ under a different nomenclature ``pricing function" \cite{Dwork1992Price}.
 It is a computationally hard puzzle that needs to be solved
 to send a mail, whereas the solution of the puzzle can be verified efficiently. 
 Later, the concept of verifiable delay functions was
 formalized in~\cite{Dan2018VDF}.

Given the security
parameter $\lambda$ and delay parameter $T$, the prover needs 
to evaluate the $\vdf$ in time $T$. The verifier
verifies the output in $\poly(\lambda,\log T)$-time 
using some proofs produced by the prover. 
A crucial property of $\vdf$s, namely sequentiality,
ensures that the output can not be computed in time much less than $T$ even in the
presence of $\poly(\lambda,T)$-parallelism. 
$\vdf$s have several applications ranging
from non-interactive time-stamping to resource-efficient blockchains, however, are really
rare in practice because of the criteria sequentiality.
In order to design new $\vdf$s we must find problems that offer sequentiality. 
To the best of our knowledge so far, 
all the practical $\vdf$s are based on two inherently sequential algebraic problems 
-- modular exponentiation in groups of unknown
order~\cite{Pietrzak2019Simple,Wesolowski2019Efficient} (fundamentally known as the time-lock puzzle
\cite{Rivest1996Time}) and isogenies over super-singular curves~\cite{Feo2019Isogenie}. 
The security proofs of these $\vdf$s
are essentially polynomial-time reductions from one of these assumptions to the corresponding $\vdf$s.
Thus, from the perspective of the designers, the first hurdle is to find inherently sequential
problems. 
 
The main motivation behind this study has been where should we search for
such inherently sequential problems which are also efficiently verifiable?
In this paper, we show that the class of all $\vdf$s, namely $\VDF$, is a subclass of the
class $\CLS$ (continuous local search). In particular, we prove that
\textsc{Relaxed-Sink-of-Verifiable-Line}~\cite{ChoudhuriNash} is a complete problem for
this new class $\VDF$.

\subsection{Proof Sketch}\label{contributions}
The key challenges with this aim are, 
\begin{description}
\item [$\VDF$-hardness] reducing any arbitrary $\vdf$ into a hard distribution of $\rSVL$
(abbreviation for \textsc{Relaxed-Sink-of-Verifiable-Line}) instances.

\item [$\VDF$-membership] producing $\vdf$s from a family of subexponentially-hard $\rSVL$ instances.
 \end{description}

First, we define $\vdf$ as a language that helps us, in turn, to define the
class $\VDF$ as a special case of interactive proofs. 
The main idea is to detach the Fiat--Shamir transformation from the
traditional definition of $\vdf$ in order to find its hardness 
irrespective of any random oracle.

For the first task, Choudhuri et al. suggest a method in~\cite{ChoudhuriRSW} 
for the $\vdf$s that need proof. We show that it is possible even with the $\vdf$s that have no
proofs. The trick is to define the $\eval$ function of a $\vdf$ in terms of an iterated
sequential function $f$. Use this $f$ in order to design the successor circuit $\S$ of
$\rSVL$ instances. As a much easier case, we also show that the permutation $\vdf$s can be
reduced to such hard $\rSVL$ instances even without using any $f$.

We accomplish the second task by deriving a permutation $\vdf$ from a subexponentially
hard family of $\rSVL$ instances. The derived $\vdf$ is proven to be secure. Although it
suffices for the membership in the class $\VDF$, we also deduce a generic $\vdf$ from
the same $\rSVL$ instances but using an one-way injective function. 

\section{Related Work}\label{literature}
In this section, first we describe some well-known schemes that are qualified as $\vdf$s.

\subsubsection{Squaring over $\Z_p$} The pricing function by Dwork--Naor scheme~\cite{Dwork1992Price} asks a prover,
given a prime $p=3\; (\mathbf{mod}\; 4)$ and a quadratic residue $x$ modulo $p$,
to find a $y$ such that $y^2= x\;(\mathbf{mod}\;p)$. The prover
has no other choice other than using the identity $y= x^{\frac{(p+1)}{4}}\;(\mathbf{mod}\;p)$, 
but the verifier verifies the correctness using $y^2= x\;(\mathbf{mod}\;p)$. 
The drawback of this design is that the delay parameter $T=\O(\log p)$. Thus the difference between the evaluation 
and the verification may be made up by a prover with $\poly(T)$-processors by parallelizing the field multiplications.
Moreover, it is difficult to generate the public parameters of this $\vdf$ for sufficiently large $T$
as $\setup$ needs to sample a prime $p > 2^{\Omega(T)}$. 

\subsubsection{Injective Rational Maps}
In 2018, Dan et al.~\cite{Dan2018VDF} propose a $\vdf$ based on injective rational
maps of degree $T$, where the fastest possible inversion is to compute the polynomial
GCD of degree-$T$ polynomials. They conjecture that 
it achieves $(T^2,o(T))$ sequentiality using permutation polynomials as the candidate map.
However, it is a weak $\vdf$ as it needs $\mathcal{O}(T)$ processors to evaluate the output in time $T$.

\subsubsection{RSW Puzzle}\label{vdf}
Rivest, Shamir, and Wagner~\cite{Rivest1996Time} introduced the time-lock puzzle stating
that it needs at least $T$ number of sequential squaring to compute 
$y=g^{2^T}\mod{\Delta}$ when the factorization of $\Delta$ is unknown.
Therefore they proposed this encryption that can be decrypted 
only sequentially. Starting with $\Delta=pq$ such that $p,q$ are large primes,
the key $y$ is enumerated as $y=g^{2^T}\mod{\Delta}$. Then the verifier,
uses the value of $\phi(\Delta)$ to reduce the exponent to
$e=2^T\mod{\phi(\Delta)}$ and finds out $y= g^e\mod{\Delta}$.
On the contrary, without the knowledge of $\phi(\Delta)$, the only option available to the prover
is to raise $g$ to the power $2^T$ sequentially. 
As the verification stands upon a secret, the knowledge of $\phi(\Delta)$, 
it is not a $\vdf$ as verification should depend only on public parameters.

Pietrzak~\cite{Pietrzak2019Simple} and Wesolowski~\cite{Wesolowski2019Efficient} circumvent 
this issue independently. We describe both the $\vdf$s in the generic group $\mathbb{G}$
as these schemes can be instantiated over two different groups --
 the RSA group $\multgroup{\Delta}$ and the class group of imaginary quadratic
number field. Both the protocols use a common random oracle
$\RO_\mathbb{G}:\{0,1\}^*\rightarrow \mathbb{G}$ to map the input statement $x$ to the
generic group $\mathbb{G}$. We assume $g:=\RO_\mathbb{G}(x)$.

\begin{description}

\item [Pietrzak's $\vdf$]
It exploits the identity $z^ry=(g^rz)^{2^{T/2}}$
where $y=g^{2^T}$, $z=g^{2^{T/2}}$ and $r \in \Z_{2^\lambda}$ is chosen at random.
So the prover is asked to compute the output $y=g^{2^T}$ and the proof $\pi=\{u_1, u_2, \ldots, u_{\log{T}}\}$ such that 
$u_{i+1}=u_i^{r_i+{2^{T/2^i}}}$, $r_i = \RO(u_i, T/2^{i-1},v_i,u_i^{2^{T/2^{i-1}}})$ and
$v_i=u_i^{r_i\cdot{2^{T/2^i}}+2^T}$. The verifier computes the $v_i=u_i^{r_i\cdot{2^{T/2^i}}+2^T}$
and checks if $v_{\log{T}}=u_{\log T}^2$.
So the verifier performs $\sum_{1}^{\log{T}}\log{r_i}$ number of sequential
squaring. As $\RO$ samples $r_i$ uniformly from its range $\Z_{2^\lambda}$, we have
$\sum_{1}^{\log{T}}\log{r_i}=\O(\lambda \log{T}) $.
The effort to generate the proof $\pi$ is in
$\O(\sqrt{T} \log{T})$.

\item [Wesolowski's $\vdf$]
It asks the prover to compute an output $y=g^{2^T}$ and
a proof $\pi=g^{\lfloor2^T/\ell\rfloor}$, where 
$\ell=\RO_{\texttt{prime}}(\texttt{bin}(g)|||\texttt{bin}(y))$ is a $2\lambda$-bit prime. 
It needs $\O(T/\log {T})$ time to do the same.
The verifier checks if $y=\pi^\ell \cdot g^{(2^T \mod{\ell})}$. 
Hence the verification needs at most $2\log \ell=4 \lambda$ squaring.

\end{description}

\subsubsection{Isogenies over Super-Singular Curves}
Feo et al.~\cite{Feo2019Isogenie} presents two $\vdf$s based on isogenies over super-singular
elliptic curves. They start with five groups $\langle G_1,G_2,G_3,G_4,G_5\rangle$
of prime order $T$ with two non-degenerate bilinear pairing maps
$e_{12}: G_1 \times G_2 \rightarrow G_5$ and $e_{34}: G_3 \times G_4 \rightarrow G_5$.
Also there are two group isomorphisms
$\phi: G_1 \rightarrow G_3$ and $\overline{\phi}: G_4 \rightarrow G_2$. 
Given all the above descriptions as the public parameters along with a generator $P\in G_1$,
the prover needs to find $\overline{\phi}(Q)$, where $Q\in G_4$, using $T$ sequential steps.
The verifier checks if $e_{12}(P,\overline{\phi}(Q))=e_{34}(\phi(P),Q)$ in
$\poly(\log{T})$ time. It runs on super-singular curves over $\mathbb{F}_p$ and $\mathbb{F}_{p^2}$ 
as two candidate groups. 
While being inherently non-interactive, there are two drawbacks as mentioned by the authors themselves.
First, it requires a trusted setup, and second, the 
setup phase may turn out to be slower than the evaluation.

Mahmoody et al. recently rule out the possibility of having a $\vdf$ out 
of random oracles only~\cite{Mahmoody2020RO}.
 \begin{table*}[h]
 \caption{Comparison among the existing VDFs. $T$ is the targeted time bound, $\lambda$ is 
 the security parameter, $\Delta$ is the number of processors. All the quantities may be
 subjected to $\mathcal{O}$-notation, if needed.}
 \label{tab : VDF}
  \centering
  \begin{tabular}{|l@{\quad}|r@{\quad}|r@{\quad}|r@{\quad}|r@{\quad}|r@{\quad}|r@{\quad}}
     \hline
         VDF & \textsf{Eval} & \textsf{Eval} &  \textsf{Verify} & \textsf{Setup} & Proof   \\
 (by authors)&  Sequential   & Parallel      &                  &                &  size   \\
     \hline
     
     Dwork and Naor~\cite{Dwork1992Price}       & $T$   & $T^{2/3}$  &  $T^{2/3}$  & $T$ & $\textendash$ \\ 
     [0.3 em] \hline 
         
     Dan et al.~\cite{Dan2018VDF}         & $T^2$ & $>T-o(T)$  &  $\log{T}$  & $\log{T}$ & $\textendash$  \\
     [0.3 em] \hline
     
     Wesolowski~\cite{Wesolowski2019Efficient} & $(1+\frac{2}{\log{T}})T$   & $(1+\frac{2}{\Delta\log{T}})T$  &  $\lambda^{4}$  & $\lambda^{3}$ & $\lambda^{3}$ \\
     [0.3 em] \hline
     
     Pietrzak~\cite{Pietrzak2019Simple}        & $(1+\frac{2}{\sqrt{T}})T$   & $(1+\frac{2}{\Delta\sqrt{T}})T$  &  $\log{T}$  & $\lambda^{3}$ & $\log{T}$ \\
     [0.3 em] \hline 
     
     Feo et al.~\cite{Feo2019Isogenie}         & $T$   & $T$  &  $\lambda^4$  & $T\log{\lambda}$ & \textendash \\
     [0.3 em] \hline 
     
  \end{tabular}
 \end{table*}

\subsubsection{$\PPAD$-hardness of Cryptographic Protocols}
Now, we briefly mention few works in order to show the significance of the class $\PPAD$ 
in the context of cryptography.

Abbot, Kane and Valiant were the first to show that
virtual black-box obfuscation \cite{Valiant} can be used to generate hard instances of
\textsc{End-of-Line} ($\EOL$). Since, virtual black-box obfuscation is known only for
certain functions, Bitansky et al. consider indistinguishability obfuscation ($i\O$) to
show that quasi-polynomially hard $i\O$ and subexponentially hard one-way function
reduce to $\EOL$ via a new problem \textsc{Sink-of-Verifiable-Line}~\cite{BitanskyNash}. 
Building up further, Garg et al. derived
$\PPAD$-hardness from polynomially-hard $i\O$ or compact public-key functional encryption 
and one-way permutation~\cite{Garg16iO}.  
Relying on sub-exponentially hard injective one-way functions, Komargodski and Segev show that
quasi-polynomially hard private-key functional encryption implies $\PPAD$-hardness.

Hub\'{a}\v{c}ek and Yogev introduced a new total
search problem \textsc{End-of-Metered-Line} ($\EOML$) proving that it is hard to search local optima 
even over continuous domains. They also show that $\EOML$ belongs to a subclass, namely
continuous local search ($\CLS$), of $\PPAD$.

A striking result by Choudhuri et al. shows that relative to a random oracle (used in
Fiat--Shamir transformation), hardness in the class $\#\mathbf{P}$ implies
hardness in $\CLS$~\cite{ChoudhuriNash}. In particular, 
they derive a new verifiable procedure applying the
Fiat--Shamir transformation on the sumcheck protocol for \textsc{\#SAT} and reduce it  
to a new problem \textsc{Relaxed-Sink-of-Verifiable-Line} ($\rSVL$) in $\CLS$. 

A similar work by the same group of authors inspires our work~\cite{ChoudhuriRSW}. 
It reduces the problem of finding $g^{2^T}\mod{N}$ relative 
to a random oracle (used in Fiat--Shamir transformation) to $\rSVL$.
A typical property in Pietrzak's $\vdf$~\cite{Pietrzak2019Simple} 
called "proof-merging" is at the core of this reduction. 
Suppose, $\pi^T_{g \rightarrow y}=\{u_1, \ldots, u_{\log T}\}$ denotes 
the proof for $h=g^{2^T}$ in Pietrzak's $\vdf$. The property proof-merging is the
observation that given two proof $\pi^T_{g \rightarrow h}$ 
and $\pi^T_{h \rightarrow y}$, finding the proof $\pi^{2T}_{g \rightarrow y}$ reduces to
finding a proof $\pi^T_{u \rightarrow v}$ such that,
$u:=g^r \cdot h$,
$v:=h^r \cdot y$ and 
$r:=\RO(u_1,g,y,2T)$.
It works because the element $h$ must be present in the proof 
$\pi^{2T}_{g \rightarrow y}=\{u_1, \ldots,u_{\log T+1}\}$ 
as $u_1$. Therefore, the merged proof $\pi^T_{u \rightarrow
v}=\{u'_1,\ldots, u'_{\log T}\}$ is equivalent to the proof  
$\pi^{2T}_{g \rightarrow y}=\{u_1,u'_1, \ldots,u'_{\log T}\}$.

\section{Preliminaries}\label{preliminaries}

We start with the notations.

\subsection{Notations}
We denote the security parameter with $\lambda\in\mathbb{Z}^+$.
The term $\poly(\lambda)$ refers to some polynomial of $\lambda$, and
$\negl(\lambda)$ represents some function $\lambda^{-\omega(1)}$.
If any randomized algorithm $\mathcal{A}$ outputs $y$ on an input $x$, 
we write $y\xleftarrow{R}\mathcal{A}(x)$. By $x\xleftarrow{\$}\mathcal{X}$,
we mean that $x$ is sampled uniformly at random from $\mathcal{X}$. For a string $x$, 
$|x|$ denotes the bit-length of $x$, whereas for any set $\mathcal{X}$, $|\mathcal{X}|$ denotes 
the cardinality of the set $\mathcal{X}$. If $x$ is a string then $x[i \ldots j]$ denotes the substring 
starting from the literal $x[i]$ ending at the literal $x[j]$. 
We consider an algorithm $\adv$ as efficient if it runs in 
probabilistic polynomial time (PPT). 
 
\subsection{Verifiable Delay Function}\label{VDF}
We borrow this formalization from~\cite{Dan2018VDF}.

\begin{definition}\normalfont{ \bf (Verifiable Delay Function).}
A verifiable delay function from domain $\X$ to range $\Y$ is a tuple of algorithms 
$(\setup, \eval, \verify)$ defined as follows,
\begin{itemize}[label=\textbullet]
\item \textsf{Setup}$(1^\lambda, T) \rightarrow \pp$
 is a randomized algorithm that takes as input a security parameter $\lambda$ 
 and a targeted time bound $T$, and produces the public parameters 
 $\pp$. We require \textsf{Setup} to run in $\poly(\lambda,\log{T})$ time.
 
 \item \textsf{Eval}$(\pp, x) \rightarrow (y, \pi)$ takes an input 
 $x\in\mathcal{X}$, and produces an output $y\in\mathcal{Y}$ and a (possibly empty) 
 proof $\pi$. \textsf{Eval} may use random bits to generate the proof 
 $\pi$. For all $\pp$ generated 
 by $\textsf{Setup}(\lambda, T)$ and all $x\in\mathcal{X}$, the algorithm 
 \textsf{Eval}$(\pp, x)$ must run in time $T$.
 
 \item \textsf{Verify}$(\pp, x, y, \pi) \rightarrow \{0, 1\}$ is a 
 deterministic algorithm that takes an input $x\in\mathcal{X}$, an output $y\in\mathcal{Y}$,
 and a proof $\pi$ (if any), and either accepts ($1)$ or rejects ($0)$. 
 The algorithm must run in $\poly(\lambda,\log{T})$ time.
\end{itemize}
\end{definition}

Before we proceed to the security of $\vdf$s we need the precise model of parallel
adversaries \cite{Dan2018VDF}. 
\begin{definition}\normalfont{(\bf Parallel Adversary)}\label{paradv} 
A parallel adversary $\adv=(\adv_0,\adv_1)$ is a pair of non-uniform 
randomized algorithms $\adv_0$ with total running time $\poly(\lambda,T)$, 
and $\adv_1$ which runs in parallel time $\sigma(T)<T-o(T)$ on at 
most $\poly(\lambda,T)$ number of processors.
\end{definition}
Here, $\adv_0$ is a preprocessing algorithm that precomputes some
$\st$ based only on the public parameters, and $\adv_1$ exploits
this additional knowledge to solve in parallel running time $\sigma$ on 
$\poly(\lambda,T)$ processors.

The three desirable properties of a $\vdf$ are now introduced.

\begin{definition}\normalfont{(\bf Correctness)}\label{def: Correctness} 
A $\vdf$ is correct with some error probability $\varepsilon$,
if for all $\lambda, T$, parameters $\pp$, 
and $x\in\mathcal{X}$, we have
\[
\Pr\left[
\begin{array}{l}
\textsf{Verify}(\pp,x,y,\pi)=1
\end{array}
\Biggm| \begin{array}{l}
\pp\leftarrow\textsf{Setup}(1^\lambda,T)\\
x\xleftarrow{\$} \mathcal{X}\\
(y,\pi)\leftarrow\textsf{Eval}(\pp,x)
\end{array}
\right]
\ge 1 - \negl(\lambda).
\]
\end{definition}

\begin{definition}\normalfont{\bf(Soundness)}\label{def: Soundness} 
A $\vdf$ is computationally sound if for all non-uniform algorithms $\adv$ 
that run in time $\mbox{poly}(T,\lambda)$,
we have
\[
\Pr\left[
\begin{array}{l}
y\ne\textsf{Eval}(\pp,x)\\
\textsf{Verify}(\pp,x,y,\pi)=1
\end{array}
\Biggm| \begin{array}{l}
\pp\leftarrow\textsf{Setup}(1^\lambda,T)\\
(x,y,\pi)\leftarrow\mathcal{A}(1^\lambda,T,\pp)
\end{array}
\right] \le \negl(\lambda).
\]
\end{definition}

Further, a $\vdf$ is called statistically sound when all adversaries 
(even computationally unbounded) have at most $\negl(\lambda)$ advantage.
Even further, it is called perfectly sound if we want this probability to be $0$ 
against all adversaries. Hence, perfect soundness implies statistical soundness which
implies computational soundness but not the reverses. 

\begin{definition}\normalfont{\bf (Sequentiality)}\label{def: Sequentiality}
A $\vdf$ is $(\Delta,\sigma)$-sequential if there exists no
pair of randomized algorithms $\adv_0$ with total running time
$\mbox{poly}(T,\lambda)$ and $\adv_1$ which runs
in parallel time $\sigma$ on at most $\Delta$ processors, such that
\[
\Pr\left[
\begin{array}{l}
y=\textsf{Eval}(\pp,x)
\end{array}
\Biggm| \begin{array}{l}
\pp\leftarrow\textsf{Setup}(1^\lambda,T)\\
\st\leftarrow\mathcal{A}_0(1^\lambda,T,\pp)\\
x\xleftarrow{\$}\mathcal{X}\\
y\leftarrow\mathcal{A}_1(\st,x)
\end{array}
\right]
\le \negl(\lambda).
\]
\end{definition}


\begin{definition}{\normalfont \textbf{(Permutation $\vdf$).}}
Permutation $\vdf$s are the $\vdf$s with $\X=\Y$
where $\X$ and $\Y$ denote the input and output domains respectively.
\end{definition}

We reiterate an important remark from~\cite{Dan2018VDF} but as a lemma.
\begin{lemma}{\normalfont ($T \in \mathsf{SUBEXP(\lambda)}).$}\label{subexp}
If $T > 2^{o(\lambda)}$ then there exists an adversary that breaks the sequentiality of
the $\vdf$ with non-negligible advantage.
\end{lemma}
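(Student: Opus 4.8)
The plan is to construct, under the hypothesis, an explicit parallel adversary $\adv=(\adv_0,\adv_1)$ in the sense of Definition~\ref{paradv} that outputs $\eval(\pp,x)$ correctly on a uniformly random $x\in\X$ while $\adv_1$ runs in parallel time only $\poly(\lambda,\log T)$. Since any polynomial in $\log T$ is $o(T)$, this beats the threshold $T-o(T)$ of Definition~\ref{paradv}, so $\adv$ witnesses a non-negligible (in fact overwhelming) advantage against sequentiality in the sense of Definition~\ref{def: Sequentiality}. So the whole argument is a single preprocessing/table-lookup attack, and the hypothesis on $T$ is exactly what makes the preprocessing affordable.

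Step~1 is to restate the hypothesis usefully: $T>2^{o(\lambda)}$ says precisely that $\lambda=O(\log T)$, equivalently $2^{\Theta(\lambda)}=\poly(T)$; consequently the preprocessing budget $\poly(\lambda,T)$ granted to $\adv_0$ already contains $2^{O(\lambda)}$, and the $\poly(\lambda,\log T)$ running time of $\verify$ collapses to $\poly(\log T)$. Step~2 is the attack itself. $\adv_0$ sees only $\pp$ and spends its $2^{O(\lambda)}\le\poly(T)$-size budget to compile a string $\st$ from which $\eval(\pp,\cdot)$ can be evaluated at shallow depth, and $\adv_1$ on input $(\st,x)$ returns the answer. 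When the input set is polynomially bounded — e.g.\ inputs are $\Theta(\lambda)$-bit strings, a common convention, so that $|\X|=2^{\Theta(\lambda)}=\poly(T)$ — $\adv_0$ simply tabulates $\eval(\pp,x)$ for all $x\in\X$ at total cost $|\X|\cdot T=\poly(\lambda,T)$, and $\adv_1$ performs one associative lookup $\st[x]$, which on $\poly(\lambda,T)$ processors has parallel depth $\poly(\log T)$. For the algebraic constructions whose domain is itself exponential in $\lambda$ (a group of order $\approx 2^\lambda$), $\adv_0$ instead spends the same $2^{O(\lambda)}$ steps to brute-force the $\lambda$-parametrised hardness assumption the scheme rests on — e.g.\ factoring $\Delta$ to recover $\phi(\Delta)$ in the RSW-type schemes~\cite{Rivest1996Time} — and hands the trapdoor to $\adv_1$, who then evaluates by the short-exponent shortcut $g^{\,2^T\bmod\phi(\Delta)}$ in $\poly(\lambda,\log T)$ time. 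Step~3 is bookkeeping: in both cases $\adv_1$ outputs exactly $\eval(\pp,x)$, so its success probability is $1\ge 1-\negl(\lambda)$, non-negligible, and its parallel depth $\poly(\lambda,\log T)=\poly(\log T)$ is $o(T)<T-o(T)$; this recovers, for general $\vdf$s, the remark of~\cite{Dan2018VDF}.

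The step I expect to be the real obstacle is making the case split in Step~2 uniform and faithful to Definition in Section~\ref{VDF}, which places no a~priori bound on $|\X|$: one must argue either that it suffices to tabulate over the $\poly(\lambda,\log T)$ bits of $x$ that $\verify$ — hence, by soundness, the unique valid output — actually depends on (together with the convention that these are $O(\lambda)$ many, so the table stays of size $\poly(T)$), or else abstract the underlying hardness assumption enough that ``$\adv_0$ can afford $2^{O(\lambda)}$ work'' already yields an evaluation trapdoor. Once that is pinned down, everything downstream — the lookup having small depth, the depth being $o(T)$, the output being exactly correct, the probability being overwhelming — is routine, so the write-up should concentrate its effort on the domain-size / trapdoor-recovery step.
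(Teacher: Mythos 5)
Your attack is genuinely different from the one in the paper. The paper's proof never tries to compute $\eval(\pp,x)$ at all: the adversary fixes an arbitrary $y\in\Y$, and then uses its $\poly(\lambda,T)\supseteq 2^{\Omega(\lambda)}$ processors to test candidate proofs $\pi_i$ in parallel, one call to the efficient $\verify$ per processor; the success probability is bounded below by $\poly(T)/2^{\Omega(\lambda)}$, which is non-negligible exactly because $T>2^{o(\lambda)}$ makes $\poly(T)$ as large as $2^{\Omega(\lambda)}$. That argument uses nothing about $\X$, about the structure of $\eval$, or about any underlying assumption --- only that $\verify$ is efficient and that the proof/output space searched has size $2^{O(\lambda)}$ --- and it is a single parallel step, so no preprocessing phase is needed. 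Your attack instead front-loads all the work into $\adv_0$ and has $\adv_1$ output the \emph{correct} value $\eval(\pp,x)$; in that sense yours is actually closer to the letter of Definition~\ref{def: Sequentiality} (which demands $y=\eval(\pp,x)$), whereas the paper's adversary produces an accepting but possibly wrong $(y,\pi)$ and is really exploiting the soundness/sequentiality boundary. Your Step~1 observation that $T>2^{o(\lambda)}$ is equivalent to $2^{O(\lambda)}\subseteq\poly(T)$ is the same arithmetic the paper uses in its last line.

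The gap is the one you flagged yourself, and it is real: the lemma is stated for an arbitrary $\vdf$, whose domain $\X$ is not bounded by $2^{O(\lambda)}$ in the definition (indeed, elsewhere the paper allows $\X=\{0,1\}^*$). Your Case~A (full tabulation) is correct but only under the added hypothesis $|\X|=2^{O(\lambda)}$, and your Case~B is not a proof: a generic $\vdf$ carries no guarantee of a ``$\lambda$-parametrised hardness assumption'' whose $2^{O(\lambda)}$-time break yields a shallow evaluation circuit --- that is a property of RSW-type schemes, not a consequence of the $\vdf$ axioms. Nor does the suggested repair via ``the bits of $x$ that $\verify$ depends on'' go through, since $\verify$ depending on few bits of $x$ says nothing about which bits $\eval$ depends on, and sequentiality requires reproducing $\eval$'s output exactly. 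To close the lemma for general $\X$ you would have to switch targets from $\eval$ to $\verify$, i.e., fall back on precisely the guess-and-verify attack the paper gives. As written, your proof establishes the lemma only for the restricted (though common) setting $\X=\{0,1\}^{\Theta(\lambda)}$, which is weaker than the claim.
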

\begin{proof}
$\adv$ observes that the algorithm $\verify$ is efficient.
So given a statement $x \in \X$,  $\adv$ chooses an arbitrary $y \in \Y$ as the output
without running $\eval(x,\pp,T)$. Now, $\adv$ finds the proof $\pi$ by a brute-force
search in the entire solution space with its $\poly(T)$ number of processors.
In each of its processors, $\adv$ checks if $\verify(x,\pp,T,y,\pi_i)=1$ with different
$\pi_i$. The advantage of $\adv$ is $\poly(T)/2^{\Omega(\lambda)} \ge \negl(\lambda)$ as
$T > 2^{o(\lambda)}$. 

\end{proof}
So we need $T \le 2^{o(\lambda)}$ to restrict the advantage of $\adv$ 
upto $2^{o(\lambda)}/2^{\Omega(\lambda)}=2^{-\Omega(\lambda)}$.

\subsection{Search Problems}
In this section, we review the basics of search problems
from~\cite{ChoudhuriNash,BitanskyNash,PPAD94,TFNP91}. 

Suppose, $R \subseteq \{0,1\}^* \times \{0,1\}^*$ 
is a relation such that, for all $(x,y) \in R$,
\begin{enumerate}[label=\roman*]
\item $R$ is polynomially-balanced i.e., $|y|\le \poly(|x|)$. 
\item $R$ is efficiently-recognizable. 
\end{enumerate}

A search problem $(\L,R)$ is defined by a set of instances $\L \subseteq \{0,1\}^*$ and
a relation $R$. In particular, given a $x \in \L$, the search problem $(\L,R)$ is to
find an $y$ if there exists an $(x,y)\in R$, otherwise say "no". The set of all search problems 
is called as functional-$\NP$ or $\FNP$. For example, the search version 
of 3-$\mathsf{SAT}$ belongs to $\FNP$. 

The relation $R$ is called total if, for every $x$, there is always a $y$ such that
$(x,y)\in R$. A search problem $(\L,R)$ is called total when $R$ is total. It means 
total search problems always have solutions e.g., $\mathsf{FACTORING}$. The set 
of all total search problems is called as total-$\FNP$ or $\TFNP$.

A notable subclass of $\TFNP$ is called as $\PPAD$ which stands for "polynomial 
parity argument in a directed graph". It is defined as the set of all problems that are
polynomial-time reducible in \textsc{End-of-Line} problem~\cite{PPAD94}.

\begin{definition}{\normalfont \textbf{(\textsc{End-of-Line} problem $\EOL$).}} 
An \textsc{End-of-Line} instance $(\S,\P)$ consists of a pair of circuits
$\S,\P: \{0,1\}^n \rightarrow\{0,1\}^n$ such that $\P(0^n)=0^n$ and
$\S(0^n) \ne 0^n$. The goal is to find a vertex $v\in\{0,1\}^n$ 
such that $\P(\S(v))\ne v$ or $\S(P(v))\ne v \ne 0^n$.
\end{definition}

Thus, $\EOL$ deals with a directed graph over the vertices $\{0,1\}^n$ and the edges of
the form $(u,v)$ if and only if $\S(u)=v$ and $\P(v)=u$. Here, $\S$ and $\P$
represent the successor and predecessor functions for this directed graph. The in-degree
and out-degree of every vertex in this graph is at most one except that in-degree of
$0^n$ is 0. the goal is to find a vertex $v$, other than $0^n$, which is either 
a source (in-degree is 0) or a sink (out-degree is 0). Such a vertex always exists by
the parity argument for a graph - the number of odd degree vertex in a graph is even
Therefore, $\EOL$ is in $\TFNP$ and by definition in $\PPAD$.

A subclass of $\PPAD$, namely continuous local search $\CLS$ is 
the class of problems that are polynomial-time reducible to the problem
\textsc{Continuous-Local-Optimum}~\cite{DaskalakisCLS}.
An interesting but not known to be complete problem in 
$\CLS$ is \textsc{End-of-Metered-Line}~\cite{HubacekCLS}.  

\begin{definition}{\normalfont \textbf{(\textsc{End-of-Metered-Line} problem $\EOML$).}} 
An \textsc{End-of-Metered-Line} instance $(\S,\P,\M)$ consists of circuits
$\S,\P: \{0,1\}^n \rightarrow\{0,1\}^n$ and $\M:\{0,1\}^n \rightarrow \{0,\ldots,2^n-1\}$
such that $\P(0^n)=0^n\ne \S(0^n)$ and
$\M(0^n) =1$. The goal is to find a vertex $v\in\{0,1\}^n$ 
satisfying one of the following,

\begin{enumerate}[label=\roman*]
\item \textbf{End of Line:} either $\P(\S(v))\ne v$ or $\S(\P(v))\ne v \ne 0^n$.
\item \textbf{False source:} $v\ne 0^n$ and $\M(v)=1$.
\item \textbf{Miscount:} either $\M(v)>0 $ and $\M(\S(v))-\M(v) \ne 1$ or $\M(v) >1$
and $\M(v)-\M(\P(v))\ne 1$.
\end{enumerate}
\end{definition}

Clearly, $\EOML$ reduces to $\EOL$, but the "odometer" circuit $\M$ makes $\EOML$
easier than $\EOL$. The circuit $\M$ outputs the number of steps required to reach $v$
from the source. Observe that any vertex, for which $\M$ contradicts its correct
behaviour, solves the problem. Thus, there exists a solution for every $\EOML$ instance.
Hence, $\EOML \in \TFNP$.

The problem \textsc{Sink-of-Verifiable-Line}, introduced by Valiant et. al and further, 
developed in~\cite{BitanskyNash}, is defined as follows,

\begin{definition}{\normalfont \textbf{(\textsc{Sink-of-Verifiable-Line} problem $\SVL$).}} 
An \textsc{Sink-of-Verifiable-Line} instance $(\S,\V,T,v_0)$ consists of 
$T\in\{1,\ldots,2^n\}$, $v_0 \in \{0,1\}^n$, and two circuits
$\S: \{0,1\}^n \rightarrow\{0,1\}^n$ and $\V:\{0,1\}^n \times
\{1,\ldots,T\}\rightarrow \{0,1\}$ with the guarantee
that for every $v \in \{0,1\}^n$ and
$i\in\{1,\ldots,T\}$, it holds that $\V(v,i)=1$ if and only if $v=\S^i(v_0)$. 
The goal is to find a vertex $v\in\{0,1\}^n$ such that $\V(v,T)=1$ (i.e., the sink). 
\end{definition}
 
Similar to, $\EOL$ and $\EOML$ the circuit $\S$ implements a successor function for the
directed graph. However, in $\SVL$ the graph is a single line with the source $v_0$. The
circuit $\V$ allows to test if a vertex $v$ is at a distance of $i$ on the line from
$v_0$. The goal is to find the vertex at distance $T$ from $v_0$.
Given an arbitrary instance $(\S,\V,T,v_0)$, 
we do not know how to efficiently check if $\V$ behaves correctly. 
Therefore, every instance of $\SVL$ may not be valid and may not have solutions. 
So $\SVL \notin \TFNP$. 

Although, in \cite{BitanskyNash}, $\SVL$ was defined with the fixed source $v_0=0^n$, 
it is equivalent to $\SVL$ with arbitrary source. The instance $(\S,\V,T,0^n)$ reduces
to the instance $(\S,\V,T,v_0)$ considering $v_0=0^n$. On the other hand, the instance 
$(\S,\V,T,v_0)$ reduces to $(\S',\V',T,0^n)$ when we define $\S'(v):=\S(v \oplus v_0)$ and 
$\V'(v,i):=\V(v \oplus v_0,i)$. Most importantly, this reduction works for any search
problem in the context of $\TFNP$ that considers fixed vertex in its input (like $\EOL$ with $0^n$).

In order to solve the issue $\SVL \notin \TFNP$, Choudhuri et al. introduced a relaxed
version of it, namely \textsc{Relaxed-Sink-of-Verifiable-Line}. In this version, the
circuit $\V$ allows a few vertices off the main line starting at $v_0$. So, these
off-the line vertices also act as solutions. Hence, 

\begin{definition}{\normalfont \textbf{(\textsc{Relaxed-Sink-of-Verifiable-Line} problem $\rSVL$).}} 
An \textsc{Relaxed-Sink-of-Verifiable-Line} instance $(\S,\V,T,v_0)$ consists of 
$T\in\{1,\ldots,2^n\}$, $v_0 \in \{0,1\}^n$, and two circuits
$\S: \{0,1\}^n \rightarrow\{0,1\}^n$ and $\V:\{0,1\}^n \times
\{1,\ldots,T\}\rightarrow \{0,1\}$ with the guarantee
that for every $v \in \{0,1\}^n$ and
$i\in\{1,\ldots,T\}$, it holds that $\V(v,i)=1$ if and only if $v=\S^i(v_0)$. 
The goal is to find:
\begin{enumerate} 
\item \textbf{The sink:} a vertex $v \in \{0,1\}^n$ such that $\V(v,T)=1$ or
\item \textbf{False positive:} a pair $(v,i) \in \{0,1\}^n \times \{1,\ldots, 2^n\}$ 
such that $v \ne \S^i(v_0)$
and $\V(v,i)=1$. 
\end{enumerate} 
\end{definition}

Lemma. 10 in cf.~\cite{ChoudhuriNash} shows that \textsc{Relaxed-Sink-of-Verifiable-Line}
is many-one reducible to \textsc{End-of-Metered-Line}. Since, $\EOML \in \CLS$, 
thus $\rSVL \in \CLS \subset \PPAD$. The off the line vertices guarantee the existence of the
solutions for any arbitrary instances.

\subsection{Interactive Proof System}
Goldwasser et al. were the first to show that the interactions between the prover and
randomized verifier recognizes class of languages larger than $\NP$~\cite{Goldwasser85Knowledge}. 
They named the class as $\IP$ and the model of
interactions as the interactive proof system. Babai and Moran introduced the same notion of interactions in
the name of Arthur-Merlin games however with a restriction on the verifiers' side \cite{Babai88AM}. 
Later, Goldwasser and Sipser proved that both the models are equivalent
\cite{Goldwasser86IPAM}. Two important works in this context that motivate our present
study are by the Shamir showing that $\IP=\PSPACE$ \cite{ShamirIP} and by the Goldwasser
et al. proving that $\PSPACE = \mathbf{ZK}$, the set of all zero-knowledge protocols.
We summarize the interactive proof system from \cite{Goldwasser88ZK}.

An interactive proof system $\ips$ consists of a pair of Turing machines ($\TM$), $\prv$ and $\vrf$,
with common alphabet $\Sigma=\{0,1\}$. $\prv$ and $\vrf$ each have distinguished initial and
quiescent states. $\vrf$ has distinguished halting states out of which there is no
transitions. $\prv$ and $\vrf$ operates on various one-way infinite tapes,
\begin{enumerate}[label=\roman*.]
\item $\prv$ and $\vrf$ have a common read-only input tape.
\item $\prv$ and $\vrf$ each have a private random tape and a private work tape.
\item $\prv$ and $\vrf$ have a common communication tape.
\item $\vrf$ is polynomially time-bounded. This means $\vrf$ halts on input $x$ in time
$\poly(|x|)$. $\vrf$ is in quiescent state when $\prv$ is running.
\item $\prv$ is computationally unbounded but runs in finite time. This means $\prv$ 
may compute any arbitrary function $\{0,1\}^*\rightarrow \{0,1\}^*$ on input $x$ 
in time $f(|x|)$. Feldman proved that ``the optimum prover lives in $\PSPACE$"
\footnote{We could not find a valid citation.}.
\item The length of the messages written by $\prv$ into the common communication tape is
bounded by $\poly(|x|)$. Since $\vrf$ runs in $\poly(|x|)$ time, it can not write
messages longer than $\poly(|x|)$.
\end{enumerate}

Execution begins with $\prv$ in its quiescent state and $\vrf$ in its start state.
$\vrf$'s entering its quiescent state arouses $\prv$, causing it to transition to its
start state. Likewise, $\prv$'s entering its quiescent state causes $\vrf$ to
transition to its start state. Execution terminate when $\vrf$ enters in its halting
states. Thus $\ips(x)=1$ denotes $\vrf$ accepts $x$ and $\ips(x)=0$ denotes $\vrf$
rejects $x$.

\begin{definition}{\normalfont \textbf{(Interactive Proof System $\ips$ ).}}
$\ips$ is an interactive proof system for the language $\L\subseteq \{0,1\}^*$ if 

\begin{description}
 \item \noindent {\normalfont (Correctness).} $(x \in \L) \implies \Pr[\ips(x))=1] \ge 1-\negl(|x|)$.
 \item \noindent {\normalfont (Soundness).}  $(x\notin\L) \implies
\forall\prv',\Pr[(\prv'\leftrightarrow\vrf)(x))=1] < \negl(|x|)$.
\end{description}
\end{definition}

The class of polynomial-time interactive proofs $\IP$ is defined as the class of the languages
that have $\ips$ such that $\prv$ and $\vrf$ talk for at most $\poly(n)$-rounds. 
Thus $$\IP=\{\L \mid \L \text{ has a } \poly(n)\text{-round } \ips\}.$$
Alternatively and more specifically,

\begin{definition}{\normalfont \textbf{(The Class IP).}}
$$\IP=\bigcup_{k \in\poly(n)}\IP[k].$$
For every $k$, $\IP[k]$ is the set of languages $\L$ such that there exist a probabilistic polynomial time $\TM$
$\vrf$ that can have a $k$-round interaction with a prover $\prv : \{0,1\}^*\rightarrow \{0,1\}^*$ 
having these two following properties
\begin{description}
 \item \noindent {\normalfont (Correctness).} $(x \in \L) \implies \Pr[\ips(x))=1] \ge 1-\negl(|x|)$.
 \item \noindent {\normalfont (Soundness).}  $(x\notin\L) \implies
\forall\prv',\Pr[(\prv'\leftrightarrow\vrf)(x))=1] < \negl(|x|)$.
\end{description}
\end{definition}

%

\section{Fiat--Shamir Transformation}
Any interactive protocol $\ips$ can be transformed into a non-interactive
protocol if the messages from the verifier $\vrf$ are replaced with the response of
a random oracle $\RO$. This is known as Fiat--Shamir transformation (FS)~\cite{FS86}. In
particular, the $i$-th message from $\vrf$ is computed as
$y_i:=\RO(x,x_1,y_1,\ldots,x_i, y_{i-1})$ where $x_i$ denotes the $i$-th
response of $\prv$. When $\RO$ is specified in the public parameters of a $k$-round
protocol, the transcript $x,x_1,y_1,\ldots,x_k, y_{k-1}$ can be verified publicly.
Thus, relative to a random oracle $\RO$, a $k$-round interactive proof protocol 
$\ips$ can be transformed into a two-round non-interactive argument 
$(\prv_{FS} \leftrightarrow \vrf_{FS})$ where $\prv_{FS}$ sends the entire transcript 
$x,x_1,y_1,\ldots,x_k, y_k$ to $\vrf_{FS}$ in a single round. Under the assumption that $\RO$ is
one-way and collision-resistant, $\vrf_{FS}$ accepts $x\in \L$ in the next round if and only if 
$\vrf$ accepts. Here we summarize two claims on Fiat--Shamir transformation stated in~\cite{Ephraim20VDF}.

\begin{lemma}\label{FS}
If there exists an adversary $\adv$ who breaks the soundness of the non-interactive
protocol $(\prv_{FS} \leftrightarrow \vrf_{FS})$ with the probability $p$ using
$q$ queries to a random oracle then there exists another adversary $\adv'$ who breaks
the soundness of the $k$-round interactive protocol $\ips$ 
with the probability $p/q^k$.
\end{lemma}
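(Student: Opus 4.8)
The plan is to prove Lemma~\ref{FS} by the standard ``forking''-style argument for the Fiat--Shamir transform, adapted to the $k$-round setting: from an adversary $\adv$ against the non-interactive protocol we build an adversary $\adv'$ against the $k$-round interactive protocol by having $\adv'$ simulate the random oracle for $\adv$ internally, answering all but $k$ distinguished queries with fresh random strings, and embedding the honest interactive verifier's $k$ messages into those distinguished queries. The key observation is that a successful transcript $x,x_1,y_1,\ldots,x_k,y_k$ output by $\adv$ is, by construction of the FS transform, one in which each verifier-message $y_i$ equals $\RO(x,x_1,y_1,\ldots,x_i,y_{i-1})$; so if $\adv'$ can guess in advance which of the $q$ oracle queries $\adv$ makes correspond to the $i$-th prefix $(x,x_1,y_1,\ldots,x_i,y_{i-1})$ for each $i \in \{1,\ldots,k\}$, then $\adv'$ can plant the real verifier's challenges there and relay $\adv$'s eventual transcript to $\vrf$.

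First I would set up the reduction precisely: $\adv'$ runs $\adv$, maintaining a table for $\RO$; before starting, $\adv'$ samples an ordered $k$-tuple of distinct indices $(j_1,\ldots,j_k) \in \{1,\ldots,q\}^k$ uniformly at random, the guess for ``at which query does the prefix that should receive the $i$-th verifier challenge first appear.'' When $\adv$ makes its $j_i$-th fresh query, $\adv'$ forwards the corresponding partial transcript as its own $i$-th prover message to $\vrf$, receives $\vrf$'s reply, and uses that as the oracle answer; all other queries are answered with independent uniform strings, exactly as a real random oracle would. When $\adv$ halts with a transcript $\tau = (x,x_1,y_1,\ldots,x_k,y_k)$, $\adv'$ outputs $(x,x_1,\ldots,x_k)$ as its interaction with $\vrf$. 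The correctness of this simulation is that, conditioned on the guess being right (the queries at positions $j_1 < \cdots < j_k$ really are the prefixes occurring in $\tau$, in order), the view of $\adv$ is identically distributed to its view in the real FS experiment, and the embedded challenges are exactly the messages $\vrf$ sent, so $\vrf$ accepts whenever $\adv$'s transcript is FS-accepting.

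Then I would lower-bound the success probability. Conditioned on $\adv$ succeeding (probability $p$), the transcript it outputs involves $k$ particular prefix-queries among its $\le q$ queries, occurring in a particular order; the probability that the uniformly random ordered tuple $(j_1,\ldots,j_k)$ matches them is at least $1/q^k$ (it is $1/(q(q-1)\cdots(q-k+1)) \ge 1/q^k$ if one insists on distinctness, or exactly $1/q^k$ if one allows repeats and argues the relevant indices are automatically distinct). Since the guess is information-theoretically independent of $\adv$'s run until the point where a planted answer is first used, and up to that point the simulation is perfect, we get $\Pr[\adv' \text{ breaks soundness}] \ge p/q^k$, as claimed.

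The main obstacle — and the place where care is genuinely needed rather than routine bookkeeping — is handling the adaptivity and ordering of $\adv$'s queries: $\adv$ need not query the $k$ prefixes in increasing prefix-length order, may query a prefix before it has ``committed'' to the earlier challenges, and may make the critical prefix-query only after many irrelevant queries. One has to argue that in any FS-accepting transcript the $k$ prefixes $(x,x_1,y_1,\ldots,x_i,y_{i-1})$ are nonetheless \emph{distinct} strings and each must have been queried (else $y_i$ would be unconstrained and match only with negligible probability, which can be folded into the $\negl$ slack), so that a single uniformly random assignment of $k$ query-slots has the stated $1/q^k$ chance of being simultaneously correct; and one must check that embedding a live challenge at slot $j_i$ does not disturb the distribution of answers at slots queried earlier than $j_i$, which holds because those were answered with fresh randomness exactly as $\RO$ would. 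I would also note the mild assumption (implicit in the FS setup described above) that $\RO$ is collision-resistant, so that $\adv$ cannot force two different prefixes to collide and thereby ``reuse'' one query for two challenges; this is what lets us treat the $k$ needed queries as genuinely $k$ distinct slots.
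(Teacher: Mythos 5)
Your reduction is correct and is the standard query-guessing argument for the Fiat--Shamir transform of a $k$-round protocol: guess the $k$ oracle positions at which the accepting transcript's prefixes are first queried, embed the live verifier's challenges there, and pay a $1/q^k$ factor for the guess. The paper itself does not actually prove this lemma --- its ``proof'' is a bare pointer to the literature --- so your write-up supplies the argument the paper omits rather than diverging from it. The only point worth making explicit is that the argument really gives $\Pr[\adv' \text{ wins}] \ge p/q^k - \negl(|x|)$ rather than exactly $p/q^k$, owing to the negligible-probability event that some prefix of an accepting transcript was never queried; you flag this correctly, and that is the form in which the bound is consumed by the following lemma.
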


\begin{proof}
See~\cite{Goldreich96} for details. 
\end{proof}

\begin{lemma}\label{soundFS}
Against all non-uniform probabilistic polynomial-time adversaries,  
if a $k$-round interactive protocol $(\prv \leftrightarrow \vrf)$
achieves $\negl(|x|^k)$-soundness then the non-interactive protocol 
$(\prv_{FS} \leftrightarrow \vrf_{FS})$ has $\negl(|x|)$-soundness.
\end{lemma}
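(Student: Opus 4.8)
The plan is to derive Lemma \ref{soundFS} as a direct corollary of Lemma \ref{FS} by a contrapositive argument, being careful about how the parameter $q$ (the number of oracle queries) interacts with the polynomial bound on the adversary's running time. First I would fix an arbitrary non-uniform PPT adversary $\adv$ attacking the soundness of $(\prv_{FS} \leftrightarrow \vrf_{FS})$, and suppose for contradiction that $\adv$ succeeds with probability $p = p(|x|)$ that is non-negligible, i.e. $p \ge |x|^{-c}$ infinitely often for some constant $c$. Since $\adv$ runs in polynomial time, the number of random-oracle queries it makes satisfies $q \le |x|^{d}$ for some constant $d$.

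Next I would invoke Lemma \ref{FS}: from such an $\adv$ we obtain an adversary $\adv'$ breaking the soundness of the $k$-round interactive protocol $(\prv \leftrightarrow \vrf)$ with probability at least $p/q^k \ge |x|^{-c}/|x|^{dk} = |x|^{-(c+dk)}$ (infinitely often). Here $k$ is a fixed constant (or at most polynomial — I would note which regime is intended; for $\IP[k]$ with constant $k$ this is cleanest), so $c + dk$ is a constant, and hence $p/q^k$ is a non-negligible function of $|x|$. Now I would compare this against the hypothesis: $(\prv \leftrightarrow \vrf)$ has $\negl(|x|^k)$-soundness, meaning every prover strategy succeeds with probability at most $\mu(|x|^k)$ for some negligible $\mu$; since $|x|^k$ is polynomially related to $|x|$, $\mu(|x|^k)$ is still negligible in $|x|$, so this contradicts the non-negligible success of $\adv'$. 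Therefore no such $\adv$ exists, and $(\prv_{FS} \leftrightarrow \vrf_{FS})$ is $\negl(|x|)$-sound.

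The main obstacle — really a bookkeeping subtlety rather than a deep difficulty — is making sure the quantifiers over $q$ and the "infinitely often" nature of non-negligibility are handled correctly: $q$ is not a fixed constant but is itself bounded by a polynomial in $|x|$ that depends on $\adv$, so one must first fix $\adv$ (and hence its query polynomial) before applying Lemma \ref{FS}, and one must track that the loss factor $q^k$ is polynomial precisely because $k$ is (at most) a constant and $\adv$ is PPT. I would also flag the interpretation of "$\negl(|x|^k)$-soundness": the point of writing $|x|^k$ rather than $|x|$ in the interactive-soundness hypothesis is exactly to absorb the multiplicative slack $q^k$ introduced by Fiat--Shamir, and I would make this explicit so the reader sees why the exponent $k$ appears on both sides. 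With these points in place the proof is a two-line chain of inequalities, so I would keep the write-up short and emphasize the conceptual reason (FS costs a factor $q^k$, which is polynomial, hence negligible-preserving).
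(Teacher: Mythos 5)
Your proposal is correct and takes essentially the same route as the paper: the paper's proof likewise just observes that a PPT adversary can make only $q \le \poly(|x|)$ oracle queries and substitutes $q = |x|^c$ into Lemma~\ref{FS}, letting the $\negl(|x|^k)$ hypothesis absorb the $q^k$ loss. Your contrapositive phrasing and your explicit bookkeeping (fixing $\adv$ before fixing its query polynomial, and flagging the constant-versus-polynomial-$k$ regime) are simply a more careful write-up of that same argument.
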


\begin{proof}
Since, all the adversaries run in probabilistic polynomial time, the number of queries
$q$ to the random oracle must be upper-bounded by $\poly(|x|)$.
Putting $q =|x|^c $ for any $c \in \Z^+$ in lemma.~\ref{FS}, it follows the claim. 

\end{proof}

\section{$\vdf$ Characterization}
 
In this section, we investigate the possibility to model $\vdf$s as a language in order
to define its hardness. It seems that there are two hurdles,
\begin{description}
\item [Eliminating Fiat--Shamir]
The prover $\prv$ in Def.~\ref{VDF}, generates the proof $\pi:=f(x,y,T,\RO(x,y,T))$ using Fiat--Shamir
transformation where $y:=\eval(x,\pp,T)$. Unless Fiat--Shamir is eliminated from $\vdf$, its
hardness remains relative to the random oracle $\RO$. Sect.~\ref{ivdf} resolves this
issue.

\item  [Modelling Parallel Adversary] How to model the parallel adversary $\adv$ (Def. \ref{paradv}) 
in terms of computational complexity theory? We model $\adv$ as a special variant of
Turing machines described in Def.~\ref{PTM}.
\end{description}

We address the first issue now.

\subsection{Interactive $\vdf$s}\label{ivdf}

We introduce the interactive $\vdf$s in order to eliminate the Fiat--Shamir.
In the interactive version of a $\vdf$, the $\vrf$ replaces the
randomness of Fiat--Shamir heuristic. 
In particular, a non-interactive $\vdf$ with the Fiat--Shamir transcript 
$\langle x,x_1,y_1,\ldots,x_k, y_k \rangle$ can be translated into an equivalent $k$-round 
interactive $\vdf$ allowing $\vrf$ to choose $y_i$s in each round.

%

\begin{definition}\normalfont{ \bf (Interactive Verifiable Delay Function).} 
An interactive verifiable delay function is a tuple $(\setup, \eval,\mathsf{Open} ,\verify)$ 
that implements a function $\X\rightarrow\Y$ as follows,
\begin{itemize}[label=\textbullet]
\item $\setup(1^\lambda, T) \rightarrow \pp$
 is a randomized algorithm that takes as input a security parameter $\lambda$ 
 and a delay parameter $T$, and produces the public parameters 
 $\pp$ in $\poly(\lambda,\log{T})$ time.
 
 \item $\eval(\pp, x) \rightarrow y$ takes an input 
 $x\in\mathcal{X}$, and produces an output $y\in\mathcal{Y}$. 
 For all $\pp$ generated by $\textsf{Setup}(\lambda, T)$ 
 and all $x\in\mathcal{X}$, the algorithm 
 \textsf{Eval}$(\pp, x)$ must run in time $T$.

 \item $\mathsf{Open}(x,y,\pp,T,t)\rightarrow\pi$ 
 takes the challenge $t$ chosen by $\vrf$ and recursively computes 
 a proof $\pi$ in $k \in \poly(\lambda, \log T)$-rounds of 
 interaction with $\vrf$. In general, for some $k \in \poly(\lambda,\log T)$, 
 $\pi=\{\pi_1, \ldots, \pi_k\}$ can be computed as 
 $\pi_{i+1}:=\mathsf{Open}(x_i,y_i,\pp,T,t_i)$ where 
 $x_i$ and $y_i$ depend on $\pi_i$. Observing $(x_i,y_i,\pi_i)$ 
 in the $i$-th round, $\vrf$ chooses the challenge $t_i$ for 
 the $(i+1)$-th round. Hence, $\vrf$ can efficiently collect all the proofs 
 $\pi=\{\pi_1, \ldots, \pi_k\}$ in $\poly(\lambda,\log T)$-rounds of interactions.
 $\mathsf{Open}$ does not exist for the $\vdf$s that need no proof (e.g., 
 \cite{Feo2019Isogenie}). 
 \item $\verify(\pp, x, y, \pi) \rightarrow \{0, 1\}$ is a 
 deterministic algorithm that takes an input $x\in\mathcal{X}$, an output $y\in\mathcal{Y}$,
 and the proof vector $\pi$ (if any), and either accepts ($1)$ or rejects ($0)$. 
 The algorithm must run in $\poly(\lambda,\log{T})$ time.
\end{itemize}
\end{definition}

All the three security properties remain same for the interactive $\vdf$. 
Sequentiality is preserved by the fact that
$\mathsf{Open}$ runs after the computation of $y:=\eval(x,\pp,T)$. For soundness, we rely
on lemma.~\ref{soundFS}. The correctness of interactive $\vdf$s implies the correctness
of the non-interactive version as the randomness that determines the proof is not in the
control of $\prv$. Therefore, an honest prover always convinces $\vrf$. 

%
%
%

Although the interactive $\vdf$s do not make much sense as publicly verifiable proofs in
decentralized distributed networks, it allow us to analyze its hardness irrespective of
any random oracle.  

In order to model parallel adversary, we consider a well-known variant of Turing machine that suits the
context of parallelism. We describe the variant namely parallel Turing machine as
briefly as possible from (Sect. 2 in cf.~\cite{Worsch1993Parallel})%
 \subsection{Parallel Turing Machine}
 Intuitively, a parallel Turing machine has multiple control units (\textsf{CU})
 (working collaboratively) with a single head associated with each of them
 working on a common read-only input tape \cite{Worsch1993Parallel}. 
 and a common read-write work tape.
 \begin{definition}{\normalfont \textbf{(Parallel Turing Machine).}}\label{PTM}
  a parallel Turing machine is a tuple 
  $\PTM = \langle Q, \Gamma,\Sigma, q_0, F, \delta \rangle$ where
\begin{enumerate}
 \item $Q$ is the finite and nonempty set of states.
 \item $\Gamma$ is the finite and non-empty set of tape alphabet symbols including the input alphabet $\Sigma$.
 \item $q_0\in Q$ is the initial state.
 \item $F \subseteq Q$ is the set of halting states.
 \item $\delta : 2^Q \times \Gamma\rightarrow 2^{Q\times D} \times \Gamma$ 
 where $D=\{-1,0,+1\}$ is the set of directions along the tape.
\end{enumerate}
 \end{definition}
 
 A configuration of a $\PTM$ is a pair $c=(p,b)$ of mappings
 $p:\Z^+\rightarrow 2^Q$ and $b:\Z^+\rightarrow \Gamma$.
 The mapping $p(i)$ denotes the set of states of the \textsf{CU}s
 currently pointing to the $i$-th cell in the input tape and $b(i)$ is the 
 symbol written on it. So it is impossible for two different \textsf{CU}s 
 pointing to the same cell $i$ while staying at the same state simultaneously.
 During transitions $c'=(M'_i,b'(i))=\delta(c)=\delta(p(i),b(i))$, the set of \textsf{CU}s
 may be replaced by a new set of \textsf{CU}s $M'_i \subseteq Q \times D$. 
 The $p'(i)$ in the configuration $c'$ is defined as 
 $p'(i)=\{q \mid (q,+1)\in M'_{i-1} \lor (q,0)\in M'_{i} \lor (q,-1)\in M'_{i+1}\}$. 
 
 Without loss of generality, the cell $1$ is observed in order to find 
 the halting condition of $\PTM$. We say that a $\PTM$ halts on a 
 string if and only if $p(1)\subseteq F$ after some finite time. 
 The notion of decidability by a $\PTM$ 
 is exactly same as in $\TM$. We denote $\PTM(s,t,h)$
 as the family of all languages for which there is a $\PTM$ recognizing them 
 using space $s$, time $t$ and $h$ processors. Thus languages decidable by a
 $\TM$ is basically decidable by a $\PTM(s,t,1)$. Assuming $\TM(s,t)$ is the set of
 languages recognized by a $\TM$ in space $s$ and time $t$, we mention Theorem 15 from
 (cf. \cite{Worsch1993Parallel}) without the proof.


We observe that the parallel adversary $\adv$ defined in Def. \ref{paradv} is
essentially a $\PTM$ having $\poly(\lambda,T)$ processors running on $\poly(\lambda,T)$
space in time $\sigma(T)$. We will refer such a $\PTM$ with $\poly(\lambda,T)$-$\PTM$
(w.l.o.g.) in our subsequent discussions.

\subsection{$\vdf$ As A Language} 

Now  we characterize $\vdf$s in terms of computational complexity theory. We
observe that, much like $\ips$, $\vdf$s are also proof system for the languages,
\[
\L=\left\{(x,y,T)
\begin{array}{l}
\end{array}
\Biggm| \begin{array}{l}
\pp\leftarrow\setup(1^\lambda,T)\\
x \in  \{0,1\}^\lambda\\
y\leftarrow\eval(\pp,x)
\end{array}
\right\}.
\]
$\prv$ tries to convince $\vrf$ that the tuple $(x,y,T)\in \L$ in
polynomially many rounds of interactions. 
In fact, Pietrzak represents his $\vdf$ using such a language (Sect. 4.2 in cf.
\cite{Pietrzak2019Simple}) where it needs $\log T$ (i.e., $\poly(\lambda))$
rounds of interaction. However, by design, the $\vdf$ is non-interactive. It uses
Fiat--Shamir transformation.

Thus, a $\vdf$ closely resembles an $\ips$ except on the fact
that it stands sequential (see Def.~\ref{def: Sequentiality}) even against 
an adversary (including $\prv$) possessing subexponential parallelism. 
Notice that a $\poly(\lambda,T)$-$\PTM$ (see Def. \ref{PTM})
precisely models the parallel adversary described in Def. \ref{paradv}. 
In case of interactive proof systems, we never talk about the running time of $\prv$
except its finiteness. On the contrary, $\prv$ of a $\vdf$ must run for at least
$T$ time in order to satisfy its sequentiality. Hence, we define $\vdf$ as follows,

\begin{definition}{\normalfont \textbf{(Verifiable Delay Function $\vdfs$).}}
For every $\lambda\in \Z^+$, $T \in 2^{o(\lambda)}$ and for all $s= (x,y,T) \in
\{0,1\}^{2\lambda +\lceil\log T\rceil}$,
$\vdfs$ is a verifiable delay function for a language $\L\subseteq\{0,1\}^*$ if
\begin{description}
 \item \noindent {\normalfont (Correctness).} $(s \in \L) \implies \Pr[\vdfs(s))=1]\ge 1-\negl(\lambda)$.
 \item \noindent {\normalfont (Soundness).} $(s \notin\L)\implies\forall\adv,\Pr[\advs(s))=1] \le \negl(\lambda)$.
 \item \noindent {\normalfont (Sequentiality).} $(s\in\L)\implies\forall\adf,\Pr[\adfs(s))=1] \le \negl(\lambda)$.
\end{description}
where, 
\begin{enumerate}[label=\roman*.]
\item $\prv : \{0,1\}^*\rightarrow \{0,1\}^*$ is a $\TM$ that runs in time $\ge T$, 
\item $\adv : \{0,1\}^* \rightarrow \{0,1\}^*$ is a $\TM$ that runs in 
time $\poly(\lambda,T)$,
\item $\adf$ is a {\normalfont $\poly(\lambda,T)$}-$\PTM$ 
(see Def. \ref{PTM}) that runs in time $<T$.
\end{enumerate}
\end{definition}


Further, we define the class of all verifiable delay functions as, 
\begin{definition}{\normalfont \textbf{(The Class $\VDF)$.}}\label{VDF}
$$\VDF=\bigcup_{k \in \poly(\lambda)}  \VDF[k].$$
For every $k\in\Z^+$, $\VDF[k]$ is the set of languages $\L$ 
such that there exists a probabilistic polynomial-time 
$\TM$ $\vrf$ that can have a {\normalfont $k$}-round interaction with
\begin{enumerate}[label=\roman*.]
\item $\prv : \{0,1\}^* \rightarrow \{0,1\}^*$ is a $\TM$ that runs in time $\ge T$, 
\item $\adv : \{0,1\}^* \rightarrow \{0,1\}^*$ is a $\TM$ that runs in time $\poly(\lambda,T)$,
\item $\adf$ is a {\normalfont $\poly(\lambda,T)$}-$\PTM$ 
(see Def. \ref{PTM}) that runs in time $<T$.
\end{enumerate}
satisfying these three following properties,

\begin{description}
 \item \noindent {\normalfont (Correctness).} $(s \in \L) \implies \Pr[\vdfs(s))=1]\ge 1-\negl(\lambda)$.
 \item \noindent {\normalfont (Soundness).} $(s \notin\L)\implies\forall\adv,\Pr[\advs(s))=1] \le \negl(\lambda)$.
 \item \noindent {\normalfont (Sequentiality).} $(s\in\L)\implies\forall\adf,\Pr[\adfs(s))=1] \le \negl(\lambda)$.
\end{description}
\end{definition}

\section{$\VDF$-completeness of $\rSVL$}
In this section, we show that $\rSVL$ is a complete problem the class $\VDF$. 
\begin{theorem}{\normalfont \textbf{($\rSVL$ is $\VDF$-complete).}}\label{thm:complete}
$\rSVL$ is a complete problem for the class $\VDF$.  
\end{theorem}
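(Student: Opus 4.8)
The plan is to prove Theorem~\ref{thm:complete} in two directions, mirroring the two challenges ($\VDF$-hardness and $\VDF$-membership) listed in the proof sketch.

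\paragraph{$\VDF$-hardness (every $\L\in\VDF$ reduces to $\rSVL$).}
First I would take an arbitrary language $\L\in\VDF[k]$ with its interactive $\vdf$ $\vdfs$. The idea is that the honest prover's algorithm $\eval(\pp,x)$, running in time $T$, is without loss of generality an iterated sequential function: there is a function $f$ (computable in time $\poly(\lambda)$) and a polynomial-size start configuration $v_0$ so that $y=\eval(\pp,x)$ equals the $T$-th iterate $f^{(T)}(v_0)$, since any time-$T$ Turing computation can be unrolled step by step. Define the successor circuit $\S$ of the $\rSVL$ instance to be one application of $f$ (carrying along enough of $x,\pp,T$ in the state to make $\S$ well defined). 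For the verification circuit $\V(v,i)$, I would use the interactive $\vdf$'s $\mathsf{Open}/\verify$ machinery: because $\verify$ runs in $\poly(\lambda,\log T)$ time and the interaction has $k\in\poly(\lambda,\log T)$ rounds, I can hard-wire into $\V$ the deterministic Fiat--Shamir-collapsed verifier (the challenges being derived internally, as in~\cite{ChoudhuriRSW}), so that $\V(v,i)=1$ exactly when $v$ carries a correct proof that $v=f^{(i)}(v_0)$. Soundness of the $\vdf$ then guarantees that producing a ``false positive'' $(v,i)$ with $v\ne\S^i(v_0)$ but $\V(v,i)=1$ happens only with negligible probability, so an algorithm solving this $\rSVL$ instance in time $<T$ must output the genuine sink $v=f^{(T)}(v_0)$, i.e.\ $y=\eval(\pp,x)$ — contradicting $(\Delta,\sigma)$-sequentiality with $\sigma<T$ and $\Delta=\poly(\lambda,T)$. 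For the easier case of permutation $\vdf$s ($\X=\Y$), I would note that $f$ itself is a permutation, so no auxiliary proof is needed and $\V$ can be taken trivially, giving a direct reduction.

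\paragraph{$\VDF$-membership (a subexponentially-hard family of $\rSVL$ instances yields a $\vdf$).}
Here I would start from a family $\{(\S,\V,T,v_0)\}$ of $\rSVL$ instances that is hard even against $\poly(\lambda,T)$-parallel machines running in time $<T$ (subexponential hardness in the sense forced by Lemma~\ref{subexp}, so $T\le 2^{o(\lambda)}$). Set $\eval(\pp,x):=\S^T(v_0)$ where $v_0$ is derived from $x$; this runs in time $T$ by $T$ sequential applications of $\S$. Following the sketch, the clean construction is the \emph{permutation} $\vdf$: arrange that $\S$ acts as a permutation on $\{0,1\}^n$ (e.g.\ by composing with an injective padding), so $\eval$ is a bijection $\X\to\Y$. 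The proof $\pi$ on challenge $t$ from $\vrf$ is the intermediate vertex $\S^{t}(v_0)$ together with a recursive $\mathsf{Open}$ on the two halves — a halving argument in $\log T$ rounds, each checkable by one $\V$-query, giving $\poly(\lambda,\log T)$ verification. Correctness is immediate; soundness follows because any accepting transcript for a wrong $y$ yields, by the recursion, a vertex $(v,i)$ with $\V(v,i)=1$ but $v\ne\S^i(v_0)$, i.e.\ an $\rSVL$ false positive, contradicting hardness; sequentiality follows because computing $\eval(\pp,x)=\S^T(v_0)$ in parallel time $<T$ is exactly solving the $\rSVL$ instance fast. I would also sketch the generic (non-permutation) variant using an auxiliary one-way injective function to map $\X$ into the vertex set, as promised in the sketch, but the permutation version already certifies membership in $\VDF$.

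\paragraph{Main obstacle.}
I expect the delicate point to be the $\VDF$-hardness direction: the reduction must produce an $\rSVL$ instance whose \emph{hardness against parallel adversaries} is inherited from $\vdf$ sequentiality, and this requires that the $\rSVL$ verification circuit $\V$ be genuinely efficient (poly-size in $n=\poly(\lambda,\log T)$) and that the ``false positive'' escape hatch of $\rSVL$ precisely absorb the negligible soundness error of the $\vdf$ — otherwise one only gets an $\SVL$ instance, which is not in $\TFNP$. Making $\V$ work without a random oracle is exactly where the interactive-$\vdf$ formulation of Section~\ref{ivdf} and Lemma~\ref{soundFS} are needed: the $k$-round interaction collapses (with $\negl(|x|^k)$-soundness upgraded to $\negl(|x|)$) into a deterministic predicate suitable for $\V$. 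A secondary subtlety is verifying that $\S$ and $v_0$ are genuinely derivable in $\poly(\lambda,\log T)$ time from $(x,\pp,T)$ so that the whole map is a legitimate polynomial-time reduction; this is routine given that $\eval$'s step function is poly-time, but it must be stated.
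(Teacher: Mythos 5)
Your two-direction plan is exactly the paper's: Theorem~\ref{thm:complete} is proved there by combining a membership theorem (a permutation $\vdf$ built from a subexponentially hard $\rSVL$ family with polynomially many false positives, via $\eval:=\S^T$ and $\verify:=\V(\cdot,T)$) with a hardness theorem (an $\rSVL$ instance built from an interactive $\vdf$ by iterating a one-step function as $\S$ and using $\mathsf{Open}/\verify$ as $\V$), and your soundness/sequentiality arguments (false positives absorb the soundness error; a fast $\rSVL$ solver breaks sequentiality) match the paper's almost verbatim. Two divergences are worth flagging. First, in the membership direction you graft a Pietrzak-style halving $\mathsf{Open}$ protocol onto the construction; this is unnecessary, since the $\rSVL$ promise already makes $\V(y,T)$ a complete $\poly(\lambda,\log T)$-time verifier, and the paper's derived $\vdf$ accordingly has $\pi=\bot$. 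Your version is not wrong, but it imports an extra soundness argument you do not need. Second, and more substantively, your ``without loss of generality $\eval$ is an iterated function $f^{(T)}(v_0)$ on a polynomial-size configuration, by unrolling the Turing computation'' does not hold as stated: a time-$T$ computation has configurations of size up to $\Theta(T)$, and with $T\in 2^{o(\lambda)}$ these do not fit in $\{0,1\}^{\poly(\lambda)}$, so the resulting $\S$ and $\V$ would not be polynomial-size circuits over the required vertex set. The paper sidesteps this by simply postulating a function $f$ with $f(\eval(\pp,x,0),i)=\eval(\pp,x,i)$ mapping $\lambda$-bit states to $\lambda$-bit states, i.e., by assuming the $\vdf$ is an iterated sequential function over $\Y=\{0,1\}^\lambda$ rather than deriving this from the definition; if you want your hardness direction to cover arbitrary $\vdf$s you must either make the same assumption explicit or supply an argument that compresses intermediate states, which the unrolling argument alone does not provide.
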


\begin{proof}
By theorem.~\ref{thm:member} $\rSVL$ belongs to the class $\VDF$.
By theorem.~\ref{thm:hardness}, $\rSVL$ is a hard problem for the class $\VDF$.
Hence, $\rSVL$ is a complete problem for the class $\VDF$.  
\end{proof}

\begin{theorem}{\normalfont \textbf{($\rSVL\in \VDF$).}}\label{thm:member}
For the parameters $\lambda\in \Z^+$ and $T=T(\lambda) \in 2^{o(\lambda)}$, let 
$\S : \{0,1\}^\lambda \rightarrow \{0,1\}^\lambda$, $\V : \{0,1\}^\lambda \times \{1,
\ldots, T\}\rightarrow \{0,1\}$ 
and $x\in\{0,1\}^\lambda$. If there exists a family of $\rSVL$ instances 
$\{\S,\V,v,T\}_{v\in\{0,1\}^\lambda}$ such that each 
instance allows at most polynomially many (i.e., $\poly(\lambda)$) false positive vertices  
then there exists a permutation $\vdf$s with $\X=\Y=\{0,1\}^\lambda$ 
and the delay parameter $T$.
\end{theorem}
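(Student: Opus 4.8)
The plan is to construct the tuple $(\setup, \eval, \open, \verify)$ of a permutation $\vdf$ directly from the given $\rSVL$ family $\{\S,\V,v,T\}_{v \in \{0,1\}^\lambda}$, and then verify the three properties of Def.~\ref{VDF} (correctness, soundness, sequentiality) one at a time, leaning on the ``$\poly(\lambda)$ false positives'' hypothesis exactly where hardness is needed. The natural choice is to set $\eval(\pp,x) := \S^T(x)$, i.e.\ iterate the successor circuit $T$ times starting from the input $x$ (which plays the role of the source $v_0$ of an $\rSVL$ instance in the family), and to let $\open$ produce, round by round, the intermediate midpoints $\S^{T/2}(x), \S^{T/4}(\cdot),\ldots$ together with the $\V$-certificates, so that $\verify$ can check membership on a line in $\poly(\lambda,\log T)$ time in the Pietrzak-style recursive fashion already described in the Related Work section. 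Since $\S$ is a circuit on $\{0,1\}^\lambda$, to obtain a \emph{permutation} $\vdf$ one wants $\S$ (or the map $x \mapsto \S^T(x)$) to be a bijection on $\{0,1\}^\lambda$; I would either assume the successor circuit of the given family is a permutation (the statement says ``permutation $\vdf$'', so it is fair to take the $\rSVL$ family to be the permutation-line variant), or pad/modify $\S$ on the false-positive set so that the resulting iterated map is injective — this is where I expect to have to be slightly careful.

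First I would check \textbf{Setup and Eval timing}: $\setup(1^\lambda,T)$ just outputs $\pp = (\S,\V,T)$, which are $\poly(\lambda)$-size circuits together with $\lceil \log T\rceil$ bits, so it runs in $\poly(\lambda,\log T)$; and $\eval$ runs $\S$ a total of $T$ times, hence in time $T$ (up to the fixed $\poly(\lambda)$ circuit-evaluation cost, which is the standard convention). \textbf{Correctness} is immediate: the honest prover computes $y = \S^T(x)$ and the honest $\open$ hands over midpoints that genuinely lie on the line, so every $\V$-query made by $\verify$ returns $1$ and $\verify$ accepts with probability $1$. \textbf{Soundness} is where the false-positive bound enters: if a (parallel, $\poly(\lambda,T)$-time) adversary convinces $\verify$ of a wrong $y' \neq \S^T(x)$, then unwinding the recursive verification produces, at some level, a pair $(w,i)$ with $\V(w,i)=1$ but $w \neq \S^i(x)$ — i.e.\ a false positive — or else it locates the sink; since there are at most $\poly(\lambda)$ such false-positive vertices in a domain of size $2^\lambda$, and the adversary cannot do better than essentially guessing one, the cheating probability is $\poly(\lambda)/2^{\Omega(\lambda)} = \negl(\lambda)$. (One should phrase this as: producing an accepting transcript for a false statement yields a solution to the $\rSVL$ instance of the ``false positive'' type, and an accepting transcript for a true statement trivially yields the sink; hence breaking soundness is at least as hard as the $\rSVL$ instance, whose solutions away from the unique sink form a $\poly(\lambda)$-sized set.)

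The \textbf{main obstacle} is \textbf{sequentiality}. I need to argue that no $\poly(\lambda,T)$-$\PTM$ running in parallel time $\sigma < T - o(T)$ can output $\S^T(x)$ with non-negligible probability, given only $\st = \adf_0(1^\lambda,T,\pp)$ precomputed from $\pp$ alone (not from $x$). The key point is that $x$ is sampled uniformly \emph{after} preprocessing, so $\st$ carries no information specific to the chosen source; computing $\S^T(x)$ is an inherently iterated evaluation of the circuit $\S$, and since $\S$ is given only as a black-box-like circuit with a guaranteed line structure, there is no ``shortcut'' — any depth-$\sigma$ parallel computation with $\sigma < T$ cannot traverse a length-$T$ path whose structure it does not know in advance, so it can succeed only by guessing an off-path vertex, again a $\poly(\lambda)$-out-of-$2^\lambda$ event. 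I would formalize this by observing that the family being ``hard'' as an $\rSVL$ family (subexponential hardness, which is what the completeness/membership framing ultimately rests on, cf.\ the $\VDF$-hardness direction and Lemma~\ref{subexp}) is precisely the statement that the sink cannot be found faster than the obvious $T$ sequential steps; the appeal to $T \in 2^{o(\lambda)}$ via Lemma~\ref{subexp} ensures the brute-force-over-$\pi$ attack there does not apply. Finally I would note the domain sizes match ($\X = \Y = \{0,1\}^\lambda$ and the map is a permutation by construction), so $\vdfs$ as defined satisfies Def.~\ref{VDF}, and therefore the language $\L = \{(x,y,T) : y = \S^T(x)\}$ it decides lies in $\VDF[k]$ for $k = O(\log T) = \poly(\lambda)$, which is what we wanted.
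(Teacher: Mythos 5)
Your construction is, at its core, the same as the paper's: set $\eval(\pp,x):=\S^T(x)$ with the input $x$ playing the role of the $\rSVL$ source, verify via the circuit $\V$, charge any accepted wrong output to a false positive (a $\poly(\lambda)$-out-of-$2^\lambda$ event), and base sequentiality on the hardness of the $\rSVL$ family. The differences are mostly simplifications you could have made. First, the paper's derived $\vdf$ is proofless: $\verify(\pp,x,y,\bot)$ simply returns $\V(y,T)$, which is already a single $\poly(\lambda)$-size circuit evaluation by the definition of $\rSVL$, so your Pietrzak-style $\mathsf{Open}$ with midpoints $\S^{T/2}(x),\S^{T/4}(\cdot),\ldots$ is unnecessary machinery (it is harmless, since acceptance still forces $\V(y',T)=1$ on a wrong $y'$, but it buys nothing). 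Second, your worry about making $\S^T$ a bijection is moot under the paper's definitions: a ``permutation $\vdf$'' there means only $\X=\Y$, so no modification of $\S$ on the false-positive set is needed. Third, on sequentiality, your black-box ``no shortcut through a length-$T$ line'' intuition is not a proof --- $\S$ is handed to the adversary as an explicit circuit, so nothing about the false-positive bound alone rules out shortcuts; the paper instead states this as a clean reduction (Theorem on sequentiality): any $\adv$ that outputs an accepted $y$ in parallel time $T_\adv<T$ immediately yields a solver for the $\rSVL$ instance $(\S,\V,x,T)$ (either the sink or a false positive) in time $T_\adv+\O(1)$, so the derived $\vdf$ is sequential exactly under the assumption that the family is hard to solve in time less than $T$. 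You do eventually fall back on that assumption, which is the right move --- just state it as the reduction rather than as a structural impossibility. Minor point: correctness of the honest evaluation holds outright (the ``if'' direction of the $\rSVL$ guarantee), so your probability-$1$ claim is fine and in fact cleaner than the paper's $1-\negl(\lambda)$ phrasing.
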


\begin{proof}
Given any $\lambda\in \Z^+$ and any $T \in 2^{o(\lambda)}$,
we derive a permutation $\vdf$ from a sub-family of $\rSVL$,
$\{\S,\V,v,T\}^{v\in\{0,1\}^\lambda}$ as follows,

\begin{itemize}[label=\textbullet]
 \item \textsf{Setup}$(1^\lambda, T) \rightarrow \pp$
  It samples a $\rSVL$ sub-family $\{\S,\V,v,T\}_{v\in \{0,1\}^\lambda}$ as the public parameter $\pp$
  from the family $\{\S,\V,v,T\}_{v,\lambda}$, uniformly at random. 
 
 \item \textsf{Eval}$(\pp, x) \rightarrow (y, \bot)$ It takes an input 
 $x\in\mathcal{X}=\{0,1\}^\lambda$, and produces an output $y:=\S^T(x)$. 
 There is no proof, so $\pi=\bot$.
 
 \item \textsf{Verify}$(\pp, x, y, \bot) \rightarrow \{0, 1\}$ 
  It returns $\V(y, T)$. Note that the input $x$ is implicit 
  to the circuit $\V$. Thus, $\verify$ is not independent of $x$. 
\end{itemize}

We prove the correctness, computational soundness and sequentiality of this $\vdf$ 
in Theorem.~\ref{thm:correct}, \ref{thm:sound} and \ref{thm:sequence}.
\end{proof}

\begin{theorem}{\normalfont \textbf{(Correctness).}}\label{thm:correct}
The derived $\vdf$, in Theorem.~\ref{thm:member}, is correct. 
\end{theorem}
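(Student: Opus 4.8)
The plan is to verify Definition~\ref{def: Correctness} directly for the construction given in Theorem~\ref{thm:member}. Recall that there $\setup$ samples an $\rSVL$ instance $(\S,\V,v,T)$ uniformly from the promised family, $\eval(\pp,x)$ outputs $y:=\S^T(x)$ with empty proof, and $\verify(\pp,x,y,\bot)$ simply returns $\V(y,T)$. So the event whose probability we must lower-bound is $\V(\S^T(x),T)=1$, where the randomness is over $\pp\leftarrow\setup(1^\lambda,T)$ and $x\xleftarrow{\$}\X=\{0,1\}^\lambda$.

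First I would invoke the defining guarantee of an $\rSVL$ instance: for every $v'\in\{0,1\}^\lambda$ and every $i\in\{1,\ldots,T\}$ we have $\V(v',i)=1$ whenever $v'=\S^i(v_0)$, where $v_0$ is the source of the instance. The subtlety is that in the construction the source is the input $x$ itself (the instance sampled by $\setup$ is $(\S,\V,v,T)$, and $\verify$ treats $x$ as implicit to $\V$, i.e.\ the line starts at $x$). Hence taking $v_0=x$ and $i=T$, the ``only if'' direction of the $\rSVL$ promise gives $\V(\S^T(x),T)=1$ on the nose. This already yields $\Pr[\verify(\pp,x,y,\pi)=1]=1\ge 1-\negl(\lambda)$, so correctness holds with zero error.

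Second, I would spell out why the false-positive budget does not interfere. The at-most-$\poly(\lambda)$ false positives are vertices $(v',i)$ with $v'\ne\S^i(v_0)$ yet $\V(v',i)=1$; they can only make $\V$ accept \emph{more} often, never fewer. Since the honestly evaluated $y=\S^T(x)$ is by definition the true vertex at distance $T$ from the source, it is never a false positive, and the acceptance in the previous paragraph is unconditional. I would also note that $\eval$ runs in time $T$ (it is $T$ applications of the successor circuit $\S$) and $\verify$ runs in $\poly(\lambda,\log T)$ time (a single evaluation of the circuit $\V$, which has size $\poly(\lambda)$), so the syntactic running-time requirements of a $\vdf$ are met; these are needed for the statement ``the derived $\vdf$ is correct'' to even be meaningful but are not the crux.

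The only real obstacle is bookkeeping about \emph{which} vertex plays the role of the $\rSVL$ source: one must be careful that $\setup$'s sampled $v$ and the $\vdf$ input $x$ are reconciled, i.e.\ that the family is indexed so that for each $x$ there is an instance whose line begins at $x$ (this is exactly what the hypothesis ``family $\{\S,\V,v,T\}_{v\in\{0,1\}^\lambda}$'' and the remark that $x$ is implicit to $\V$ are arranging, and it is the reason $\verify$ is explicitly noted not to be independent of $x$). Once that identification is made explicit, correctness is immediate from the $\rSVL$ promise with no probabilistic slack, so I would present the argument as a short direct verification rather than a reduction.
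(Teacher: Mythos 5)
Your proposal is correct and follows essentially the same route as the paper: a direct verification of the correctness definition via the $\rSVL$ promise that $\V(\S^T(x),T)=1$ when $x$ is the source of the sampled instance. If anything, your version is tighter -- you correctly observe that false positives can only increase acceptance, so correctness holds with zero error, whereas the paper needlessly treats false positives as a bad event and settles for a $1-\negl(\lambda)$ bound.
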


\begin{proof}
For the $\rSVL$ instance $(\S,\V,v,T)$, 
$\V(u,T)=1$ if and only if either $u=\S^T(v)$ or $u$ is a false positive
vertex i.e., $\V(u,T)=1$  but $u\ne\S^T(v)$. 
In this $\vdf$, $\eval(x)=\S^T(x)$ and $\verify(x,y,T)=\V(y,T)$. Therefore, 
$\verify(x,y,T)=1$ if and only if either $y=\eval(x,T)$ or $y$ is false positive vertex off
the main line in the $\rSVL$ instance $(\S,\V,x,T)$. Since, the number of false positive
vertices is in $\poly(\lambda)$, the probability that a random vertex $y$ is false
positive is at most $\poly(\lambda)/2^\lambda=\negl(\lambda)$. Therefore, 
$$\Pr[\V(y,T)=1 \land  y=\S^T(x)] \ge  1-\negl(\lambda).$$
\end{proof}

\begin{theorem}{\normalfont \textbf{(Soundness).}}\label{thm:sound}
For the parameters $\lambda\in \Z^+$ and $T=T(\lambda) \in 2^{o(\lambda)}$, let 
$\S : \{0,1\}^\lambda \rightarrow \{0,1\}^\lambda$, $\V : \{0,1\}^\lambda \times \{1,
\ldots, T\}\rightarrow \{0,1\}$ 
and $v\in\{0,1\}^\lambda$. If there exists an adversary $\adv$ that 
breaks the soundness of this permutation 
$\vdf$s with $\X=\Y=\{0,1\}^\lambda$ and the delay parameter $T$ with 
with a non-negligible probability $\epsilon=\epsilon(\lambda)$, 
then there exists another adversary $\adf$ who 
finds false positive vertices for the instances from the $\rSVL$ 
family $\{\S,\V,v,T\}_{v\in\{0,1\}^\lambda}$ with the probability $\epsilon$.
\end{theorem}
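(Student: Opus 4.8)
The plan is to give a direct reduction from any soundness-breaking adversary $\adv$ for the derived permutation $\vdf$ to an adversary $\adf$ that finds false-positive vertices in the $\rSVL$ family. Recall that in this $\vdf$ we have $\eval(x)=\S^T(x)$, $\verify(x,y,\bot)=\V(y,T)$, and that $\adv$ breaking soundness means (by Def.~\ref{def: Soundness}) that $\adv$ outputs a triple $(x,y,\bot)$ such that $y\ne\eval(\pp,x)=\S^T(x)$ yet $\verify(\pp,x,y,\bot)=\V(y,T)=1$, and this happens with non-negligible probability $\epsilon(\lambda)$ over the choice of $\pp$ (i.e. over the sampled sub-family $\{\S,\V,v,T\}_{v\in\{0,1\}^\lambda}$) and $\adv$'s coins.

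First I would observe that the public parameter $\pp$ of the $\vdf$ is exactly (a sample from) the $\rSVL$ family, so $\adf$, given an $\rSVL$ instance $(\S,\V,v,T)$, can simply pass the corresponding $\pp$ to $\adv$ and run it. When $\adv$ returns $(x,y,\bot)$ with $y\ne\S^T(x)$ and $\V(y,T)=1$, note that $x$ plays the role of the source vertex $v_0$ of an $\rSVL$ instance $(\S,\V,x,T)$ drawn from the same family $\{\S,\V,v,T\}_{v\in\{0,1\}^\lambda}$. By the definition of $\rSVL$, the pair $(y,T)$ is then precisely a false-positive witness for that instance: $\V(y,T)=1$ but $y\ne\S^T(x)$. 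Hence $\adf$ outputs $(y,T)$ as its false positive, and it succeeds whenever $\adv$ succeeds, so $\adf$ finds a false positive with the same probability $\epsilon(\lambda)$.

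The one point that needs care is the quantifier over the source/index: the $\rSVL$ instance that $\adf$ is handed has some fixed source $v$, whereas the instance on which $\adv$'s output is a false positive has source $x$ chosen by $\adv$. I would resolve this by letting $\adf$ itself be the one that samples the sub-family: $\adf$ receives the whole family $\{\S,\V,v,T\}_{v,\lambda}$, samples the sub-family $\pp=\{\S,\V,v,T\}_{v\in\{0,1\}^\lambda}$ exactly as $\setup$ does, feeds $\pp$ to $\adv$, reads off $x$ from $\adv$'s output, and then reports $(x,y,T)$ as a false positive for the instance $(\S,\V,x,T)$, which lies in the sampled sub-family. Since $\S$ is the same circuit across the sub-family and only the source varies, $\V(\cdot,\cdot)$ is common as well, so this is a legitimate instance and $(y,T)$ is a genuine false positive. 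I expect this bookkeeping about which instance the witness belongs to is the only mild obstacle; the core of the argument is the syntactic identity $\verify=\V$ and $\eval=\S^T$, which makes ``breaking soundness of the $\vdf$'' and ``producing a false positive in $\rSVL$'' literally the same event. The running time of $\adf$ is that of $\adv$ plus the cost of $\setup$, hence $\poly(\lambda,T)$, as required of an $\rSVL$-solver contradicting the sub-exponential hardness assumed in Theorem~\ref{thm:member}.
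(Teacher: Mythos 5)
Your proposal is correct and follows essentially the same route as the paper: exploit the syntactic identities $\eval=\S^T$ and $\verify=\V$ so that $\adf$ simply runs $\adv$ and forwards its output $y$ (with index $T$) as a false-positive vertex for the instance $(\S,\V,x,T)$ in the family, succeeding with the same probability $\epsilon(\lambda)$. Your extra care about which source vertex the witness belongs to only makes explicit a bookkeeping point the paper glosses over; it does not change the argument.
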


\begin{proof}
For the $\rSVL$ instance $(\S,\V,v,T)$, 
$\V(u,T)=1$ if and only if either $u=\S^T(v)$ or $u$ is a false positive
vertex i.e., $\V(u,T)=1$  but $u\ne\S^T(v)$. 
In this $\vdf$, $\eval(x)=\S^T(x)$ and $\verify(x,y,T)=\V(y,T)$. Therefore, 
$\verify(x,y,T)=1$ if and only if either $y=\eval(x,T)$ or $y$ is false positive vertex off
the main line in the $\rSVL$ instance $(\S,\V,x,T)$. 

As the adversary $\adv$ breaks the soundness of this $\vdf$, he must find a $y \ne
\eval(x,T)$ but $\verify(x,y,T)=1$. Therefore, in order to find a false positive vertex
$y$ in the $\rSVL$ instance $(\S,\V,x,T)$, $\adf$ runs $\adv$ on input $(x,T)$. When
$\adv$ outputs $y$, $\adf$ returns $y$ as the false positive vertex. Therefore,

$$\Pr[\adf \text{ wins }]=\Pr[\adv \text{ wins }] =\epsilon(\lambda).$$

\end{proof}

Since, the number of false positive
vertices is in $\poly(\lambda)$, the probability that a random vertex $y$ is false
positive is at most $\poly(\lambda)/2^\lambda=\negl(\lambda)$. Therefore, $\adv$ 
has negligible advantage.

\begin{theorem}{\normalfont \textbf{(Sequentiality).}}\label{thm:sequence}
For the parameters $\lambda\in \Z^+$ and $T=T(\lambda) \in 2^{o(\lambda)}$, let 
$\S : \{0,1\}^\lambda \rightarrow \{0,1\}^\lambda$, $\V : \{0,1\}^\lambda \times \{1,
\ldots, T\}\rightarrow \{0,1\}$ 
and $v\in\{0,1\}^\lambda$. If there exists an adversary $\adv$ that
breaks the sequentiality of this permutation 
$\vdf$s with $\X=\Y=\{0,1\}^\lambda$ and the delay parameter $T$ 
in time $T_\adv=T_\adv(\lambda) <T$ with 
the non-negligible probability $\epsilon=\epsilon(\lambda)$
then there exists another adversary $\adf$ who solves the instances
from the $\rSVL$ family $\{\S,\V,v,T\}_{v \in\{0,1\}^\lambda}$ 
in time $T_\adv+\O(1) <T$ with the probability $\epsilon$, 
\end{theorem}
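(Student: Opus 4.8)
The plan is to give a direct, distribution-preserving reduction, exploiting the fact that the output of the derived $\vdf$ on input $x$ is \emph{by construction} the sink of the $\rSVL$ instance whose source is $x$; so any adversary that produces that output fast immediately locates the sink.

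First I would fix the reduction machinery. Given an $\rSVL$ instance $(\S,\V,v,T)$ drawn from the family $\{\S,\V,v,T\}_{v\in\{0,1\}^\lambda}$, the new adversary $\adf$ regards the pair of circuits $(\S,\V)$ together with $T$ as the public parameters $\pp$ of the permutation $\vdf$ built in Theorem~\ref{thm:member}. This is legitimate because $\setup$ outputs precisely an object of this shape, and because the source vertex of the $\rSVL$ instance is exactly the $\vdf$ input (the circuit $\V$ has $x$ hard-wired, as noted there). Then $\adf$ runs the preprocessing stage $\adv_0$ on $\pp$ to obtain $\st$ — in the non-uniform model this $\st$ may equivalently be supplied as advice, so it costs $\adf$ nothing on the clock — and finally invokes the online stage $\adv_1(\st,x)$ with $x:=v$, outputting whatever vertex $y$ it returns.

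Next I would verify correctness. By Definition~\ref{def: Sequentiality}, with probability $\epsilon(\lambda)$ over the choice of $\pp$ and the coins of $\adv$, the vertex $y$ returned by $\adv_1$ satisfies $y=\eval(\pp,v)=\S^T(v)$, i.e.\ $y$ is exactly the vertex at distance $T$ from $v$ on the line, which is the sink of $(\S,\V,v,T)$. Hence $\adf$ returns a valid $\rSVL$ solution with the same probability $\epsilon(\lambda)$. Observe that here we never even need the false-positive branch of the $\rSVL$ goal: sequentiality is stated as the equality $y=\eval(\pp,x)$, so the recovered vertex is the genuine sink rather than merely some vertex accepted by $\V$ — this is what makes the present reduction so much lighter than the correctness and soundness arguments, which do rely on counting the $\poly(\lambda)$ false positives.

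For the timing, $\adf$ runs $\adv_1$ plus a constant amount of bookkeeping (copying $v$ into the input register, copying the answer out), hence in parallel time $T_\adv+\O(1)$; since the parallel-adversary model guarantees $T_\adv<T-o(T)$, we still have $T_\adv+\O(1)<T$ for all sufficiently large $\lambda$. Taking the contrapositive yields: if no $\poly(\lambda,T)$-$\PTM$ solves the sampled $\rSVL$ family in time below $T$ with non-negligible probability, then the derived $\vdf$ is sequential, completing (with Theorems~\ref{thm:correct} and~\ref{thm:sound}) the proof that the construction is a $\vdf$. The only delicate point — and it is the mildest of the three properties — is that the reduction be \emph{distribution-preserving}: $\adf$ must hand $\adv$ public parameters drawn from exactly the distribution $\setup$ induces, which holds immediately because $\setup$ is defined to sample the $\rSVL$ sub-family uniformly, matching the hard distribution over $\rSVL$ instances bit for bit.
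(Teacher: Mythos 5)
Your reduction is the same one the paper uses: identify the $\rSVL$ instance with the public parameters of the derived $\vdf$, run the sequentiality adversary $\adv$ on input $v$, and forward its output as the $\rSVL$ solution in time $T_\adv+\O(1)$ with the same success probability $\epsilon$. The only cosmetic difference is that the paper also allows $\adv$'s output to be a false positive (covering both branches of the $\rSVL$ goal), whereas you correctly observe that Definition~\ref{def: Sequentiality} forces the output to equal $\S^T(v)$, i.e.\ the sink, so the false-positive branch is not needed.
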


\begin{proof}
For the $\rSVL$ instance $(\S,\V,v,T)$, 
$\V(u,T)=1$ if and only if either $u=\S^T(v)$ or $u$ is a false positive
vertex i.e., $\V(u,T)=1$  but $u\ne\S^T(v)$. 
In this $\vdf$, $\eval(x)=\S^T(x)$ and $\verify(x,y,T)=\V(y,T)$. Therefore, 
$\verify(x,y,T)=1$ if and only if either $y=\eval(x,T)$ or $y$ 
is false positive vertex off the main line in the $\rSVL$ instance $(\S,\V,x,T)$. 

As the adversary $\adv$ breaks the sequentiality of this $\vdf$, he must find the
$y=\eval(x,T)$ or a $y' \ne \eval(x,T)$ but $\verify(x,y',T)=1$, in time $T_\adv <T$. 
Therefore, in order to solve the $\rSVL$ instance $(\S,\V,x,T)$, $\adf$ runs $\adv$ on input $(x,T)$. 
When $\adv$ outputs $y$, $\adf$ returns $y$ as the solution in time
$T_\adv+\O(1)$. Therefore,

$$\Pr[\adf \text{ wins }]=\Pr[\adv \text{ wins }] =\epsilon(\lambda).$$

\end{proof}

Theorem.~\ref{thm:member} gives rise to permutation $\vdf$. Although, it suffices to
prove that $\rSVL \in \VDF$, deriving a $\vdf$ with $\X \ne \Y$, from $\rSVL$ needs 
a family of injective one-way function $\mathcal{H}=\{H:\X \rightarrow \Y\}$. 

\begin{theorem}{\normalfont \textbf{($\rSVL\in \VDF$, in general).}}\label{thm:memberH}
For the parameters $\lambda\in \Z^+$ and $T=T(\lambda) \in 2^{o(\lambda)}$, let 
$\S : \{0,1\}^\lambda \rightarrow \{0,1\}^\lambda$, $\V : \{0,1\}^\lambda \times \{1,
\ldots, T\}\rightarrow \{0,1\}$ 
and $v\in\{0,1\}^\lambda$. If there exists an adversary $\adv$ that
$x\in\{0,1\}^\lambda$. If there exists a family of $\rSVL$ instances 
$\{\S,\V,v,T\}_{v\in\{0,1\}^\lambda}$ such that each 
instance allows at most polynomially many (i.e., $\poly(\lambda)$) false positive vertices  
then there exists a permutation $\vdf$s with $\X\in \{0,1\}^*$, $\Y=\{0,1\}^\lambda$ 
and the delay parameter $T$, assuming
a family of injective one-way function $\mathcal{H}=\{H:\X \rightarrow \Y\}$. 
\end{theorem}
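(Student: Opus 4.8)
The plan is to construct a generic $\vdf$ (one with $\X \ne \Y$) from the $\rSVL$ family by composing the permutation $\vdf$ of Theorem.~\ref{thm:member} with an injective one-way function $H \in \mathcal{H}$. First I would set up the algorithms as follows. The $\setup(1^\lambda,T)$ samples a $\rSVL$ sub-family $\{\S,\V,v,T\}_{v\in\{0,1\}^\lambda}$ uniformly at random, together with a uniformly sampled $H:\X\rightarrow\Y$ from $\mathcal{H}$, and outputs all of this as $\pp$. The $\eval(\pp,x)$ takes $x\in\X$, but since the $\rSVL$ successor circuit $\S$ acts on $\{0,1\}^\lambda$, it first needs to land in that space; here I would actually run the line in $\{0,1\}^\lambda$ by reading off an appropriate $\lambda$-bit handle of $x$ (or, more cleanly, take $\X=\{0,1\}^\lambda$ so $x$ is already a vertex — matching the theorem's phrasing "permutation $\vdf$s with $\X\in\{0,1\}^*$"), compute the $\rSVL$ sink candidate $z:=\S^T(x)$ in time $T$, and then output $y:=H(z)\in\Y$ with empty proof $\pi=\bot$. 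The $\verify(\pp,x,y,\bot)$ must check, without inverting $H$, that $y$ is the image of the correct sink: since $H$ is injective and efficiently computable, it suffices to have $\verify$ compute nothing about $H^{-1}$ directly but instead check $\V$-consistency. Concretely, I would have $\verify$ accept iff there is a (polynomial-size) witness; but to keep $\verify$ deterministic and proof-free, the natural route is to note that $H(\S^T(x))$ is determined by $x,\pp$, so $\verify$ re-derives $z$ only via the $\V$ oracle: accept iff $\V(z,T)=1$ where $z$ is supplied. This forces us to keep $z$ around, so I would instead define the output as $y=(H(z))$ and let $\pi$ carry $z$ — i.e.\ use $\eval(\pp,x)\rightarrow(y,\pi)$ with $y=H(z)$, $\pi=z$, and $\verify$ accepts iff $\V(z,T)=1$ and $H(z)=y$. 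This runs in $\poly(\lambda,\log T)$ time.

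The key steps, in order, are: (1) \textbf{Correctness} — for a uniformly random $x$, the sink $z=\S^T(x)$ satisfies $\V(z,T)=1$ unless $x$ seeds an instance whose main line ends at a false positive; since each instance has at most $\poly(\lambda)$ false positives and there are $2^\lambda$ choices of seed, the bad event has probability $\poly(\lambda)/2^\lambda=\negl(\lambda)$, exactly as in Theorem.~\ref{thm:correct}, and $H(z)=y$ holds by construction. (2) \textbf{Soundness} — an adversary breaking soundness produces $(x,y,\pi=z')$ with $\V(z',T)=1$, $H(z')=y$, but $y\ne\eval(\pp,x)=H(\S^T(x))$; injectivity of $H$ then forces $z'\ne\S^T(x)$, so $z'$ is a false positive for the $\rSVL$ instance seeded at $x$, giving $\adf$ that outputs false positives with the same probability, mirroring Theorem.~\ref{thm:sound}. (3) \textbf{Sequentiality} — an adversary $\adv$ computing $y=\eval(\pp,x)$ in parallel time $T_\adv<T$ yields, after one extra invocation of $H$ (cost $\O(1)$ in the relevant sense, or $\poly(\lambda)$ absorbed into the non-tight regime), a solver for the $\rSVL$ instance: $\adf$ runs $\adv$, obtains $y$, and must recover the sink $z$; here we need $\adf$ to output the sink, and since $\adv$ by assumption produced $y=H(\S^T(x))$ together with the transcript, $\adf$ extracts $z$ (either from $\adv$'s work tape in the natural implementation, or $\adv$ is modified to output the pre-image it necessarily computed). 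This parallels Theorem.~\ref{thm:sequence} with overhead $T_\adv+\O(1)<T$.

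The main obstacle I anticipate is step (3): unlike the permutation $\vdf$ where $\eval$'s output \emph{is} the sink, here $\eval$'s output is $H(\text{sink})$, and since $H$ is one-way, $\adf$ cannot in general recover the sink $z$ from $y=H(z)$ alone — so the reduction from breaking sequentiality to solving $\rSVL$ must rely on the fact that any algorithm that computes $y=H(\S^T(x))$ in time less than $T$ has, along the way, computed $\S^T(x)$ itself (since $H$ is efficiently computable but $\S^T$ is the only sequential bottleneck), hence $\adf$ can read that intermediate value off. Making this rigorous requires either (a) working in the $\PTM$ model and arguing that the sequential depth of $\adv$'s computation decomposes so that the $\S^T$-part is a computed intermediate, or (b) strengthening the $\vdf$ interface so that $\eval$ outputs the pair $(y,\pi=z)$ and sequentiality is phrased as hardness of producing that pair — which is the cleaner route and is consistent with the original $\vdf$ definition allowing a proof $\pi$. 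A secondary subtlety is that injectivity of $H$ is essential for soundness (a non-injective $H$ would let an adversary output a collision and defeat the argument), and one-wayness of $H$ is what prevents a parallel adversary from shortcutting the line by guessing $z$ from $y$; I would state both requirements explicitly. The remaining pieces — $\setup$ running in $\poly(\lambda,\log T)$, $\eval$ in time $T$, $\verify$ in $\poly(\lambda,\log T)$ — are immediate from the corresponding bounds for the $\rSVL$ circuits and for $H$.
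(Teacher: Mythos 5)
Your construction composes in the wrong order relative to what the theorem needs, and this is the source of every difficulty you then have to fight. You define $\eval(\pp,x)$ as $H(\S^T(x))$, i.e.\ you apply the injective one-way function \emph{after} walking the line. The paper instead defines $\eval(\pp,x) := \S^T(H(x))$: the role of $H$ is to inject an arbitrary input $x \in \{0,1\}^*$ into the vertex space $\{0,1\}^\lambda$ \emph{before} the walk, which is exactly what realizes the claimed domain $\X = \{0,1\}^*$ with range $\Y = \{0,1\}^\lambda$. With the paper's ordering the output $y$ is itself a vertex, so $\verify$ is simply $\V(y,T)$ on the instance seeded at $H(x)$, no proof is needed, and the correctness, soundness and sequentiality arguments are verbatim those of Theorems~\ref{thm:correct}, \ref{thm:sound} and \ref{thm:sequence} (injectivity guarantees distinct inputs seed distinct lines).

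Your ordering creates three problems, two of which you notice. First, the sequentiality reduction breaks: $\adf$ must return the sink $z$ but only sees $y = H(z)$, and one-wayness of $H$ is precisely what prevents recovery. Your proposed fixes — reading $z$ off the adversary's work tape, or redefining the output to carry $\pi = z$ — are not black-box reductions in the first case and make $H$ pointless in the second (if the sink travels in the clear as a proof, the map $x \mapsto (H(z),z)$ gains nothing over $x \mapsto z$). Second, your construction does not actually enlarge the domain: the input to $\S^T$ must still be a $\lambda$-bit vertex, so you are forced to retreat to $\X = \{0,1\}^\lambda$, which contradicts the statement being proved. Third, you are forced to abandon the proof-free, purely $\V$-based verification that makes the permutation construction of Theorem~\ref{thm:member} work. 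The single repair is to move $H$ to the input side; everything else in your write-up (the $\poly(\lambda)$ false-positive counting for correctness, the reduction to finding false positives for soundness, the $T_\adv + \O(1)$ overhead for sequentiality) then goes through unchanged.
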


\begin{proof}
Given any $\lambda\in \Z^+$ and any $T \in 2^{o(\lambda)}$,
we derive a permutation $\vdf$ from a sub-family of $\rSVL$,
$\{\S,\V,v,T\}^{v\in\{0,1\}^\lambda}$ as follows,

\begin{itemize}[label=\textbullet]
 \item \textsf{Setup}$(1^\lambda, T) \rightarrow \pp$
  It samples a $\rSVL$ sub-family $\{\S,\V,v,T\}^{v\in \{0,1\}^\lambda}$ as the public parameter $\pp$
  from the family $\{\S,\V,v,T\}_{v\in\{0,1\}^\lambda}$, uniformly at random. 
 Apart from the $\rSVL$ instance, it also chooses an $H \in \mathcal{H}$, uniformly at
 random.  
 \item \textsf{Eval}$(\pp, x) \rightarrow (y, \bot)$ It takes an input 
 $x\in\{0,1\}^*$, and produces an output $y:=\S^T(H(x))$. 
 There is no proof, so $\pi=\bot$.
 
 \item \textsf{Verify}$(\pp, x, y, \bot) \rightarrow \{0, 1\}$ 
  It returns $\V(y, T)$. Note that the input $H(x)$ is implicit 
  to the circuit $\V$. Thus, $\verify$ is not independent of $x$. 
\end{itemize}

The proofs for the correctness, computational soundness and sequentiality of this $\vdf$ 
are same as Theorem.~\ref{thm:correct}, \ref{thm:sound} and \ref{thm:sequence}.
\end{proof}

As before, first we show that every permutation $\vdf$ reduces to $\rSVL$.

\begin{theorem}{\normalfont \textbf{(Reduction from Permutation $\VDF$ to $\rSVL$).}}
For the parameters $\lambda\in \Z^+$ and $T=T(\lambda) \in 2^{o(\lambda)}$, let 
$(\setup, \eval,\mathsf{Open}, \verify)$ be an interactive permutation $\vdf$ on
 the domain $\{0,1\}^\lambda$. Then
there exists a hard distribution of $\rSVL$ instances
$\{\S,\V,v,T\}_{v\in\{0,1\}^\lambda}$  
that have at most polynomially many (i.e., $\poly(\lambda)$) false positive vertices, such that
$\S : \{0,1\}^\lambda \rightarrow \{0,1\}^\lambda$, $\V : \{0,1\}^\lambda \times \{1,
\ldots, T\}\rightarrow \{0,1\}$ 
and $x\in\{0,1\}^\lambda$. 
\end{theorem}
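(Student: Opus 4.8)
The plan is to construct, from an interactive permutation $\vdf = (\setup, \eval, \mathsf{Open}, \verify)$ on $\{0,1\}^\lambda$, an explicit $\rSVL$ family $\{\S, \V, v, T\}_{v\in\{0,1\}^\lambda}$ and then argue (i) that each instance has only polynomially many false positives, and (ii) that the family is hard in the sense required by the class $\VDF$ (no $\poly(\lambda,T)$-$\PTM$ running in time $<T$ solves it with non-negligible probability). The central idea, borrowed from the ``proof-merging'' trick of Choudhuri et al.~\cite{ChoudhuriRSW} but adapted to the permutation case where no proof is needed, is to let the sequential iterated function underlying $\eval$ \emph{be} the successor circuit $\S$. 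Concretely, since $\eval(\pp,\cdot)$ runs in time $T$ and, being a permutation $\vdf$, maps $\{0,1\}^\lambda$ to itself, I would write $\eval(\pp,x) = f^T(x)$ for a single-step function $f:\{0,1\}^\lambda\to\{0,1\}^\lambda$ computable in time $\O(1)$ (or $\poly(\lambda)$), where $f$ comes from the internal configuration-transition of the $\TM$ computing $\eval$ — this is the ``detach $f$ from $\eval$'' step mentioned in the proof sketch. Set $\S := f$, fix the source as $v := x$ (using the fixed-source-to-arbitrary-source equivalence already noted in the preliminaries), $T$ as the delay parameter, and define $\V(u,i) := 1$ iff the $\verify$/$\mathsf{Open}$ machinery certifies that $u$ is the $i$-fold image of $v$ under $f$ — i.e., $\V(u,i)=1 \iff u = f^i(v)$, checked via the efficient $\verify$ algorithm together with the (interactive, hence here non-oracle) proof that the honest prover would produce. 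Because $\verify$ runs in $\poly(\lambda,\log T)$ time, $\V$ is a legitimate polynomial-size circuit.

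Next I would verify the $\rSVL$ syntactic requirements. Totality/existence of solutions for $\rSVL$ is automatic from its definition (the relaxation — allowing false positives as solutions — is exactly what guarantees this), so the only genuine obligation on the ``structural'' side is the bound on false positives: a false positive is a pair $(u,i)$ with $\V(u,i)=1$ but $u\ne f^i(v)$. Here I would invoke soundness of the permutation $\vdf$ (Def.~\ref{def: Soundness} / Lemma~\ref{soundFS}): any $u\ne\eval(\pp,x)$ accepted by $\verify$ breaks soundness, which happens with only negligible probability over $\setup$; an averaging / union-bound argument over the (polynomially many, since $i$ ranges over $\{1,\dots,T\}$ and $T\in 2^{o(\lambda)}$, but really one wants it over the relevant checkpoints $\mathsf{Open}$ touches) positions then yields at most $\poly(\lambda)$ false positives per sampled instance with overwhelming probability. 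This is the step where I would have to be careful: if $\V$ as defined could accept many spurious $u$, the false-positive count blows up; the fix is to define $\V$ so that $\V(u,i)=1$ requires checking against the \emph{recursive} proof structure of $\mathsf{Open}$ (the $\log T$-round interaction), so that a false positive at level $i$ forces a soundness break at one of the $\poly(\lambda)$ interaction rounds, keeping the count polynomial.

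Finally, the hardness reduction runs in the contrapositive, mirroring Theorems~\ref{thm:sound} and \ref{thm:sequence} in reverse: given a $\poly(\lambda,T)$-$\PTM$ $\adf$ that solves the $\rSVL$ family in parallel time $<T$, I build an adversary that either breaks sequentiality of the $\vdf$ (if $\adf$ outputs the true sink $f^T(v)=\eval(\pp,x)$, we have computed $\eval$ in time $<T$ on $\poly(\lambda,T)$ processors — exactly a sequentiality break) or breaks soundness (if $\adf$ outputs a false positive, we have a $u\ne\eval(\pp,x)$ with $\verify(\pp,x,u,\pi)=1$, recovering $\pi$ from the transcript of $\V$'s acceptance). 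Either way the non-negligible $\rSVL$-solving probability transfers, contradicting the assumed security of the permutation $\vdf$; the $\O(1)$ overhead is just the wrapper reading $\adf$'s output. \textbf{The main obstacle} I anticipate is the false-positive bound: ensuring the verification circuit $\V$ is faithful enough that spurious acceptances are provably rare \emph{and} simultaneously efficient enough to be a legal $\rSVL$ circuit — essentially reconciling the recursive, interaction-dependent nature of $\mathsf{Open}$ with the flat, single-circuit $\V(\cdot,\cdot)$ demanded by the $\rSVL$ definition. Handling the non-permutation general case would additionally require composing with the injective one-way function $H$ of Theorem~\ref{thm:memberH}, but for this statement the permutation structure $\X=\Y$ is what makes the bare construction $\S:=f$ go through without a proof string.
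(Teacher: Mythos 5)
Your construction is essentially the paper's own: exploit $\X=\Y$ to write $\eval$ as an iterated one-step map and take that single step as the successor circuit $\S$ (the paper sets $\S(u):=\eval(\pp,u,T{=}1)$, so that $\S^i(v)=\eval(\pp,v,i)$), and build $\V$ by running $\mathsf{Open}$ and $\verify$, which is efficient because $\mathsf{Open}$ uses only $\poly(\lambda,\log T)$ rounds. You in fact go further than the paper's proof, which stops at the construction, by sketching the false-positive bound via soundness and the hardness transfer via sequentiality; those additions are consistent with how the paper argues the reverse direction in its membership theorems.
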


\begin{proof}
Given any $\lambda\in \Z^+$ and any $T \in 2^{o(\lambda)}$,
we derive a hard distribution of $\rSVL$ instances, $\{\S,\V,v,T\}_{v\in\{0,1\}^\lambda}$ 
from a permutation $\vdf$, as follows,

\begin{description}
\item [The $\S$ circuit] 
In order to design $\S$, we observe that, for every permutation $\vdf$,
$\eval(x,T)=\eval(\eval(x,T-1))$ as $\X=\Y$.
This vertex $u$ is the source in the $\rSVL$ instances. 
Therefore, we keep $\S(u)=\eval(\pp,u,T=1)$. In particular,

\begin{algorithm}
\caption{$\S(u)$ from $\eval$}\label{alg:eval}
\begin{algorithmic}[1]
\STATE $y:=\eval(\pp,u,1)$
\RETURN $y$.
\end{algorithmic}
\end{algorithm}

More generally, $\S^i(v)=\eval(\pp,v,i)$ for all $i \le T$.

\item [The $\V$ circuit] 
We take the advantage of the interactive $\vdf$s in order to design $\V$.
In particular, 
\begin{algorithm}[H]
\caption{$\V(v,T)$ from $\eval$}\label{alg:verify}
\begin{algorithmic}[1]
\STATE $\pi:=\mathsf{Open}(\pp,v,y,T)$
\STATE $w:=\verify(\pp,v,y,T,\pi)$.
\RETURN $w$
\end{algorithmic}
\end{algorithm}

 Since, $\mathsf{Open}$ takes at most $\poly(\lambda, \log
T)$-rounds, $\V$ is efficient.
\end{description}

\end{proof}

Now, the challenge is to reduce any arbitrary $\vdf$ into an hard $\rSVL$ instances.
We can not follow the approach used in~\cite{ChoudhuriRSW}. They label the $\rSVL$
graph with the proofs in the Pietrzak's $\vdf$ using the "proof-merging" technique
described in Sect.~\ref{literature}. This approach does not work for the $\vdf$s that
needs no proof e.g., the isogenie-based $\vdf$~\cite{Feo2019Isogenie}. Labelling the $i$-th node
in the $\rSVL$ graph with $\eval(\pp,x,i)$ for $i \le T$, needs $T$ different computation 
of $\eval$ for each $i$ as $\X \ne \Y$. We tackle this problem in the following theorem
using a special function $f$.

\begin{theorem}{\normalfont \textbf{($\rSVL$ is $\VDF$-hard).}}\label{thm:hardness}
For the parameters $\lambda\in \Z^+$ and $T=T(\lambda) \in 2^{o(\lambda)}$, let 
$(\setup, \eval,\mathsf{Open}, \verify)$ be an interactive $\vdf$ on
 the domain $\X=\{0,1\}^*$ and the range $\Y=\{0,1\}^\lambda$. Then
there exists a hard distribution of $\rSVL$ instances
$\{\S,\V,v,T\}_{v\in\{0,1\}^\lambda}$  
that have at most polynomially many (i.e., $\poly(\lambda)$) false positive vertices, such that
$\S : \{0,1\}^\lambda \rightarrow \{0,1\}^\lambda$, $\V : \{0,1\}^\lambda \times \{1,
\ldots, T\}\rightarrow \{0,1\}$ 
and $x\in\{0,1\}^\lambda$. 
\end{theorem}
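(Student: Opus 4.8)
The plan is to mimic the structure of the permutation-$\vdf$ reduction (the previous theorem) but to repair the mismatch $\X = \{0,1\}^* \ne \{0,1\}^\lambda = \Y$ by introducing an auxiliary iterated sequential function $f$ that lives entirely on $\Y = \{0,1\}^\lambda$. Concretely, given the interactive $\vdf$ $(\setup,\eval,\mathsf{Open},\verify)$, I would first define $f : \{0,1\}^\lambda \to \{0,1\}^\lambda$ so that one application of $f$ corresponds to ``one step'' of $\eval$ restricted to the range, i.e.\ so that $f^{T}(\cdot)$ reproduces the effect of $\eval$ on a canonically embedded input. Then I would set $v_0$ (the $\rSVL$ source) to be the embedding of the statement $x$ into $\{0,1\}^\lambda$, define the successor circuit by $\S(u) := f(u)$ so that $\S^i(v_0) = f^i(v_0)$ for all $i \le T$, and define the verification circuit $\V(u,i)$ by running $\mathsf{Open}$ to produce the (interactive, Fiat--Shamir-free) proof for the claim ``$u = f^i(v_0)$'' and then running $\verify$ on it, exactly as in Algorithm~\ref{alg:verify}. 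Because $\mathsf{Open}$ terminates in $\poly(\lambda,\log T)$ rounds and $\verify$ runs in $\poly(\lambda,\log T)$ time, $\V$ is a polynomial-size circuit, as $\rSVL$ demands.

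Next I would verify the three $\rSVL$ promises. The line structure and the bound on false positives come from soundness of the $\vdf$: any $u$ with $\V(u,i)=1$ but $u \ne \S^i(v_0)=f^i(v_0)$ is exactly a soundness break, which happens for at most $\poly(\lambda)$ many $u$ (indeed with probability $\poly(\lambda)/2^\lambda$ over a random $u$ by Theorem~\ref{thm:sound}-style counting, using $T \in 2^{o(\lambda)}$); these are precisely the admissible false-positive vertices, and there are $\poly(\lambda)$ of them. Hardness of the distribution comes from sequentiality: a parallel algorithm solving the $\rSVL$ instance in parallel time $< T$ would, via the decoding $\S^T(v_0) = f^T(v_0) \leftrightarrow \eval(\pp,x)$, break $(\Delta,\sigma)$-sequentiality of the $\vdf$ in essentially the same parallel time (up to the $\O(1)$ overhead of the embedding and of reading off the answer), contradicting Definition~\ref{def: Sequentiality}; this is the reduction direction analogous to Theorem~\ref{thm:sequence}. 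Totality / existence of solutions for every instance in the family is automatic because of the false-positive escape hatch, exactly as noted after the definition of $\rSVL$.

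The main obstacle I anticipate is the very first step: actually constructing the iterated sequential function $f$ on $\{0,1\}^\lambda$ from an arbitrary $\vdf$ whose $\eval$ is only promised to be ``computable in time $T$'' with no assumed internal iterated structure and with domain $\{0,1\}^*$ strictly larger than the range. For the algebraic $\vdf$s ($\eval(x,T)=\eval(\eval(x,T-1))$ as exploited in the permutation case) this is free, but in general one must argue that \emph{some} time-$T$ computation of $\eval$ can be unrolled into $T$ sequential ``micro-steps'' each mapping $\lambda$-bit configurations to $\lambda$-bit configurations — essentially a configuration-graph / Turing-machine-tableau argument, padding the work tape to width $\lambda$ and charging the injective one-way embedding of $x$ into the initial configuration. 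I would handle this by invoking the parallel-Turing-machine formalism of Section~5 (Def.~\ref{PTM}): treat $\eval$ as run by a $\TM$, let $f$ be its one-step transition function on (suitably encoded, length-$\lambda$) configurations, and let the $\rSVL$ node $i$ carry the $i$-th configuration; then $f^i(v_0)$ is the configuration after $i$ steps, $\V$ checks it via $\mathsf{Open}/\verify$, and any speed-up of the $\rSVL$ line is a parallel speed-up of $f^T = \eval$, i.e.\ a break of sequentiality. The one-way injection $H$ is what lets the $\X=\{0,1\}^*$ statement be compressed into the $\lambda$-bit initial configuration without affecting soundness, closing the gap between this theorem and Theorem~\ref{thm:hardness}'s hypotheses.
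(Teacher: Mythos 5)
Your construction coincides with the paper's: the paper likewise introduces a function $f$ satisfying $f(\eval(\pp,x,0),i)=\eval(\pp,x,i)$ for all $i\le T$, uses $\eval(\pp,\cdot,0)$ to map $\X=\{0,1\}^*$ into $\Y=\{0,1\}^\lambda$, sets $\S(u):=f(u,1)$ so that $\S^i(v)=f(v,i)$, and builds $\V$ by running $\mathsf{Open}$ followed by $\verify$, exactly as in your sketch. Where you diverge is instructive: the paper simply \emph{posits} $f$ by the displayed identity and moves on, whereas you correctly flag that for an arbitrary $\vdf$ with no internal iterated structure the existence of such an efficiently computable $f$ on $\{0,1\}^\lambda$ is the crux, and you propose to manufacture it by unrolling the Turing-machine computation of $\eval$ into one-step transitions on encoded configurations. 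That instinct is sound and goes beyond what the paper argues; note, however, two caveats with your repair. First, a time-$T$ computation may use up to $T$ cells of work tape, and with $T\in 2^{o(\lambda)}$ a configuration need not fit in $\lambda$ bits, so the ``pad to width $\lambda$'' step needs an explicit space bound on $\eval$ (or a hashing/Merkle-style compression of configurations, which then reintroduces soundness issues). Second, if the nodes are raw configurations rather than the values $\eval(\pp,x,i)$, the circuit $\V(u,i)$ built from $\mathsf{Open}/\verify$ no longer obviously certifies ``$u=\S^i(v_0)$,'' since the $\vdf$'s verifier attests to outputs of $\eval$, not to intermediate machine states; the paper sidesteps this only because its $f$ is assumed to output $\eval(\pp,x,i)$ itself. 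Your accompanying arguments that soundness bounds the false positives by $\poly(\lambda)$ and that sequentiality yields hardness of the distribution are the intended ones (the paper defers them to its Theorems on soundness and sequentiality in the membership direction and does not restate them here).
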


\begin{proof}
Given any $\lambda\in \Z^+$ and any $T \in 2^{o(\lambda)}$,
we derive a hard distribution of $\rSVL$ instances, $\{\S,\V,v,T\}_{v\in\{0,1\}^\lambda}$ 
from a $\vdf$, as follows,

First, we define a function $f$ as,
$$f(\eval(\pp,x,T=0),i)=\eval(\pp,x,i) \qquad \forall i \le T.$$

The function $\eval(x,T=0)$ maps $x$ into the range $\Y$.
\begin{description}
\item [The $\S$ circuit] 
We observe that, for every $\vdf$,
$\eval(\pp,v,T)=f(v,T)=f(f(v,T-1))$ with the base case $v=\eval(\pp,x,0)$. 
This vertex $v$ is the source in the $\rSVL$ instances. 
Therefore, we keep $\S(u)=f(\pp,u,T=1)$. In particular, 

\begin{algorithm}
\caption{$\S(u)$ from $\eval$}\label{alg:eval}
\begin{algorithmic}[1]
\STATE $y:=f(\pp,u,1)$
\RETURN $y$.
\end{algorithmic}
\end{algorithm}

More generally, $\S^i(v)=f(\pp,v,i)$ for all $i \le T$.

\item [The $\V$ circuit] 
This circuit is same as in the previous theorem. 
In particular, 
\begin{algorithm}[H]
\caption{$\V(v,T)$ from $\eval$}\label{alg:verify}
\begin{algorithmic}[1]
\STATE $\pi:=\mathsf{Open}(\pp,v,y,T)$
\STATE $w:=\verify(\pp,v,y,T,\pi)$.
\RETURN $w$
\end{algorithmic}
\end{algorithm}

 Since, $\mathsf{Open}$ takes at most $\poly(\lambda, \log
T)$-rounds, $\V$ is efficient.
\end{description}

\end{proof}

\bibliographystyle{splncs04}

\end{document}